\documentclass[11pt]{article}
\usepackage{ulem}
\usepackage{amsfonts,amsthm,amssymb,amsmath,mathrsfs}
\usepackage{graphicx,color,hyperref,subfig,mathtools}
\usepackage[usenames,dvipsnames]{xcolor}
\usepackage{pdfsync}
\usepackage[margin=1.0in]{geometry}

\usepackage[T1]{fontenc}
\usepackage[scaled=.92]{helvet} [scaled=.92]

\usepackage[round]{natbib}
\usepackage{hyperref}
\usepackage{authblk}

\usepackage{multibib}
\newcites{A}{References for Appendices}

\usepackage{tabularx, booktabs, array, multirow,enumitem}

\usepackage{mdframed}

\usepackage{pgfplots}
\usepackage{tikz}
\usetikzlibrary{calc,shapes,arrows.meta,pgfplots.groupplots,decorations.pathreplacing} 

\newtheorem{theorem}{Theorem}

\newtheorem{lemma}{Lemma}

\newtheorem{proposition}{Proposition}

\usepackage{chngcntr}
\usepackage{apptools}
\AtAppendix{\counterwithin{figure}{section}}

\usepackage[english]{babel}
\usepackage[utf8]{inputenc}
\usepackage[colorinlistoftodos]{todonotes}
\usepackage{algorithm}
\usepackage{algpseudocode}

\usepackage{chngcntr}

\newcounter{parentnumber}


\usepackage[hang,flushmargin]{footmisc}

\allowdisplaybreaks

\usepackage{array}
\newcolumntype{P}[1]{>{\centering\arraybackslash}p{#1}}
\newcolumntype{M}[1]{>{\centering\arraybackslash}m{#1}}

\usepackage{amsmath}
\usepackage{amsfonts}
\usepackage{graphicx}
\usepackage{fancyvrb}
\usepackage{url}
\usepackage{subfig}
\usepackage{listings}
\usepackage{setspace}
\usepackage{tikz}
\usetikzlibrary{automata,positioning}
\usepackage{pgfplots}

\title{Mutation Order and Selection Shape Intratumor Heterogeneity in Tumor Evolution}

\author{Yichen Chu$^{1}$ \quad \quad \hspace*{-6pt} Xiaoxia Sheng$^{1}$ \quad  \hspace*{-6pt}Xuanming Zhang$^{2,*}$ \quad \hspace*{-6pt}Zicheng Wang$^{1,*}$}
\date{%
    {\scriptsize $^1$School of Data Science, The Chinese University of Hong Kong, Shenzhen, Guangdong 518172, China \\ $^2$Department of Industrial and Systems Engineering, University of Minnesota, Twin Cities, U.S. \\
    $^{*}$Correspondence:  zhan8093@umn.edu and wangzicheng@cuhk.edu.cn}
}

\begin{document}

\maketitle

\begin{abstract}
\singlespacing 
Cancer progression often requires the accumulation of multiple driver mutations, yet the same set of drivers may be acquired in different orders. How these alternative mutation-order pathways jointly shape tumor clonal structure remains unclear. We develop a multitype branching-process model in which malignant transformation requires two driver mutations and distinguish malignant cells both by their mutation order and by the independent transformation event that founded their clone. Under a successive exponential approximation, we establish point-process limits for the pathway-specific clone-size processes and derive a closed-form expression for the limiting expected Simpson's index of the combined malignant population. When the two mutation orders produce malignant cells with the same net growth rate, the limiting index separates into effective pathway weights, determined by mutation rates and birth--death dynamics at preceding stages, and within-pathway concentration terms, determined by intermediate-to-malignant growth-rate ratios. This decomposition shows that a driver's effect on heterogeneity depends critically on when it is acquired. A strong driver acquired early expands the intermediate lineage and increases the supply of independent malignant founders, whereas the same driver acquired at the final transition strengthens the growth and age advantage of early-founded malignant clones. Under additive fitness effects, these counteracting mechanisms can produce a non-monotone relationship between selective advantage and clonal concentration. We further show that threshold-like non-additive fitness effects can generate highly concentrated malignant populations, while order-dependent terminal fitness leads the faster-growing pathway to dominate asymptotically. Together, these results reveal how mutation order, mutational accessibility, selection, and epistasis jointly determine lineage-level intratumor heterogeneity.

\singlespacing 
\noindent \textbf{Keywords}: mutation order, branching processes, clonal evolution, intratumor heterogeneity, Simpson's index, epistasis
\end{abstract}
\thispagestyle{empty}

\clearpage

\section{Introduction}

Cancer progression is widely viewed as a multistage evolutionary process driven by the accumulation of driver mutations. These mutations can affect oncogenes, tumor suppressor genes, DNA-repair pathways, and other regulators of cell growth and survival. A driver mutation that gives a cell a selective advantage can promote clonal expansion. Cells within the expanding lineage may then acquire further driver mutations and eventually become malignant. This evolutionary view of cancer, first formalized through the clonal evolution model, emphasizes that tumor development is not merely a sequence of molecular events, but a dynamic process shaped by mutation, selection, and population growth (\citealt{nowell1976clonal, vogelstein2004cancer}).

The order in which driver mutations are acquired need not be fixed. Classical models often represent cancer progression as a largely linear process in which cells pass through a prescribed sequence of intermediate states before becoming malignant (\citealt{davis2017tumor}). However, genomic evidence indicates that many tumors follow branching rather than strictly linear evolutionary trajectories. Multiregion sequencing has identified genetically distinct subclonal lineages in spatially separated regions of the same tumor, indicating that multiple cell populations can evolve in parallel within a single neoplasm (\citealt{gerlinger2012intratumor}). When multiple mutational routes are biologically accessible, malignant cells may therefore arise through distinct mutation-order pathways, even when those pathways involve the same set of driver mutations (\citealt{nicholson2019competing,paterson2020mathematical}). Clinical observations provide further support for this possibility. In myeloproliferative neoplasms, for example, JAK2 and TET2 mutations can be acquired in either order across patients, giving rise to distinct mutational trajectories (\citealt{ortmann2015effect}).

Because branching evolution allows malignant cells to arise through multiple mutational routes, it generates a more complex clonal structure than that generated by a linear progression model (\citealt{davis2017tumor}). Specifically, driver mutations may confer different selective advantages, such that their order of acquisition affects the growth and relative abundance of intermediate lineages and, in turn, subsequent progression to malignancy. Driver mutations may also occur at different rates, making some routes more accessible than others. Moreover, both selective effects and mutation rates may depend on the current genetic state. Epistatic interactions can cause the fitness effect of a mutation to vary with the genetic background in which it occurs, whereas state-dependent mutation rates can modulate the accessibility of each pathway. Together, pathway-specific differences in intermediate-lineage growth, mutational accessibility, and malignant-cell fitness can shape the relative contributions of different pathways to malignant progression, as well as the number and size distribution of malignant clones generated through each pathway.

The clonal structure generated by these pathway-specific evolutionary dynamics constitutes an important component of intratumor heterogeneity, which more broadly encompasses genetic, epigenetic, and phenotypic variation among cell populations within the same tumor. In this paper, we focus on this \emph{lineage-level intratumor heterogeneity} (\citealt{mcdonald2015multitype}). Specifically, we define a malignant clone genealogically as all malignant descendants of a single cell that \sout{first} enters the malignant state, and we treat lineages founded by distinct malignant-transformation events as different clones even when their founders carry the same set of driver mutations. This clonal structure is clinically relevant because independently evolving lineages can accumulate distinct genetic or phenotypic changes and may therefore differ in their sensitivity to treatment. Depending on the tumor type and therapeutic intervention, resistant subclones may already be present before treatment or may emerge during therapy, allowing part of the tumor population to survive and subsequently contribute to relapse (\citealt{dagogo2018tumour}). Clonal genetic diversity has also been associated with disease progression in premalignant conditions such as Barrett's esophagus, where greater diversity predicts a higher risk of progression to esophageal adenocarcinoma (\citealt{maley2006genetic}). Despite these clinical implications, how pathway-specific evolutionary dynamics determine the number and relative sizes of independently founded malignant clones, and hence this lineage-level intratumor heterogeneity, remains insufficiently understood.

In this paper, we develop a multitype branching-process model to investigate how pathway-specific evolutionary dynamics shape malignant progression and intratumor heterogeneity. This framework is well suited to our purpose because differences in the selective advantages conferred by driver mutations can be represented through type-specific growth rates, differences in mutational accessibility through type-specific mutation rates, and epistatic interactions by allowing these rates to depend on the current genetic state. We consider a population initiated by a single wild-type cell that must acquire two driver mutations to reach malignancy.\footnote{The two-driver setting is adopted for analytical clarity and to isolate the role of alternative mutation orders. The same multitype branching-process framework can be formulated for progression requiring more than two driver mutations. The qualitative insights developed below carry over to such higher-dimensional mutation networks.} The two possible orders of acquisition define distinct evolutionary pathways. Cells that acquire mutation 1 before mutation 2 are distinguished from those that acquire mutation 2 before mutation 1, even though both ultimately carry the same pair of modeled driver mutations. Our analysis characterizes the relative representation of the two pathways in the malignant population and the resulting clonal concentration, quantified by Simpson's index. Together, these quantities show how pathway-specific growth and mutation dynamics determine both pathway composition and the distribution of malignant cells among independently founded clones.

Building on this framework, we characterize the biological mechanisms through which alternative mutation-order pathways shape lineage-level intratumor heterogeneity and develop the theoretical analysis needed to quantify these effects. Biologically, we show that the intratumor heterogeneity generated by alternative mutation-order pathways is jointly determined by the relative representation of those pathways and by the number and size distribution of independently founded malignant clones within each pathway. Our analysis further shows that the effect of a driver's selective advantage depends on the stage at which it is acquired. A larger fitness gain at an intermediate stage can accelerate the expansion of the intermediate population and increase the rate at which new malignant clones are founded. By contrast, a larger fitness gain at the final transition to malignancy can widen the growth-rate difference between malignant clones and the intermediate population that produces them, strengthening the age advantage of early-founded clones and increasing clonal concentration. Theoretically, we extend point-process analyses of intratumor heterogeneity from a single linear mutational sequence (\citealt{durrett2011intratumor}) to a branched mutation network. Under the successive exponential approximation, we establish vague convergence of the finite-time pathway-specific clone-size point processes and show that the associated expected Simpson's indices converge to their point-process limits. We then derive an explicit expression for the limiting expected Simpson's index when the malignant populations arising from both mutation-order pathways have equal growth rates. This expression accounts for the dependence between within-pathway clonal concentration and random pathway representation and separates the pathway-specific parameters that determine effective pathway weights from the intermediate-to-malignant growth-rate ratios that determine within-pathway clonal concentration.

The remainder of this paper is organized as follows. Section~\ref{Sec:Literature} reviews the related literature on multistage cancer progression, intratumor heterogeneity, and competing mutation-order pathways. Section~\ref{Sec:Formulation} introduces the multitype branching-process model and defines pathway representation and lineage-level heterogeneity through Simpson's index, which gives the probability that two malignant cells sampled independently and uniformly, with replacement, descend from the same malignant founder. Smaller values indicate greater lineage-level clonal diversity. We adopt this measure for its direct probabilistic interpretation and because its expression as a sum of squared clone frequencies facilitates explicit analysis within our branching-process framework. Section~\ref{Sec:Results} presents the main results. We first analyze the deterministic-order benchmark, then characterize the limiting expected Simpson's index when the two mutation orders coexist under commutative fitness, and finally examine additive, commutative but non-additive, and non-commutative fitness effects. Section~\ref{Sec:conclusion} concludes with a discussion of the biological implications, limitations, and directions for future research. The appendices contain the proofs and additional technical details.

\section{Related Literature}\label{Sec:Literature} 

Our work relates to three streams of literature: stochastic models of multistage cancer progression, empirical and theoretical studies of intratumor heterogeneity, and research on competing mutational pathways and alternative orders of driver-mutation acquisition.

The first stream develops mathematical models of mutation accumulation during carcinogenesis. Classical multistage theories used patterns of age-specific cancer incidence and hereditary cancer data to infer that malignant transformation requires the accumulation of multiple genetic events (\citealt{armitage1954age,knudson1971mutation}). Subsequent stochastic models, particularly multitype branching processes, provide a more mechanistic description of how mutant lineages arise, expand, and acquire additional alterations. These models have been used to study the waiting time to malignancy, the emergence and expansion of successive mutant types, the accumulation of driver and passenger mutations, and the distribution of premalignant lesions (\citealt{iwasa2006evolution,beerenwinkel2007genetic,haeno2007evolution,durrett2009waiting,durrett2010evolution,bozic2010accumulation,dewanji2011number,nicholson2023sequential,zhang2023waiting,zhang2024accumulation}). Related branching-process models have also been developed to analyze cancer recurrence and therapy resistance (\citealt{avanzini2019cancer,hanagal2022large,li2023comparison,leder2024parameter,leder2025parameter,leder2026clonal}). Our model builds on this literature by using a multitype birth--death process to describe the stochastic emergence and expansion of mutant lineages. In contrast to studies that focus primarily on the waiting time to malignancy or the total mutant population, we examine how pathway-specific growth and mutation dynamics determine the relative representation of alternative mutation-order pathways in the malignant population and the clonal diversity of that population.

The second stream examines the intratumor heterogeneity of evolving tumors. Empirical studies have shown that tumors often contain multiple genetically distinct subclones and may evolve along branching rather than strictly linear trajectories (\citealt{gerlinger2012intratumor,davis2017tumor}). Such heterogeneity has important clinical implications because subclonal variation can influence disease progression, treatment response, and the likelihood of relapse. For example, clonal diversity in Barrett's esophagus has been associated with an increased risk of progression to esophageal adenocarcinoma (\citealt{maley2006genetic}). On the theoretical side, branching-process models have been used to characterize clone-size distributions and genetic diversity in growing tumors. In particular, \citealt{durrett2011intratumor} analyzed intratumor heterogeneity in evolutionary models of tumor progression using quantities such as Simpson's index and the size of the largest clone. \citealt{storey2017spatial} derived estimates of Simpson's index for the premalignant population under a two-step carcinogenesis model and introduced spatial measures characterizing the typical scale of genetic heterogeneity and the extent of a premalignant clone surrounding a point biopsy. \citealt{cheek2020genetic} characterized mutation frequencies and the genetic composition of exponentially growing cell populations, including settings that relax the infinite-sites assumption and allow for cell death, selection, and  site-specific mutation rates. Other studies have derived asymptotic clone-size distributions and site-frequency spectra under different assumptions about tumor growth (\citealt{nicholson2016universal,cheek2018mutation,stein2025patterns,gunnarsson2021exact,ahmed2026site}). Our work is closest in spirit to this literature but differs in the source and organization of the intratumor heterogeneity that it studies. We distinguish malignant cells both by the mutation-order pathway through which they arise and by the independent transformation event that founds each clone. This lineage-based representation allows us to separate the relative concentration of malignant cells across evolutionary pathways from the diversity of independently founded clones within each pathway and to determine how both contribute to the resulting intratumor heterogeneity.

The third stream studies competing mutational pathways and the biological consequences associated with alternative mutation orders. Empirical studies have shown that the order in which driver mutations are acquired may be associated with differences in tumor phenotype, disease classification, and subsequent evolutionary dynamics. In myeloproliferative neoplasms, for example, the order of JAK2 and TET2 acquisition is associated with distinct molecular and clinical features (\citealt{ortmann2015effect}). Broader reviews have similarly emphasized that initiating mutations, cell of origin, and mutation order can jointly influence cancer development (\citealt{kent2017order,levine2019roles}). Mathematical studies have primarily examined which mutational sequences are realized and how quickly particular evolutionary endpoints are reached. \citealt{nicholson2019competing} studied competing mutational pathways in growing populations, with applications to multi-drug resistance, and derived the distribution of mutational routes in regimes where resistant mutants are rare. \citealt{paterson2020mathematical} developed a model of colorectal cancer initiation involving multiple alterations and showed that the distribution of mutation sequences can depend strongly on the selective advantages of intermediate genotypes. Related work has examined the temporal ordering of mutations during cancer initiation (\citealt{teimouri2021temporal}), the order of driver mutations in colorectal cancer (\citealt{li2023mathematical}), and mutation-order effects in myeloproliferative neoplasms (\citealt{wang2024order}). Collectively, these studies show that pathway-specific fitness and mutation dynamics can influence the relative frequencies and timing of alternative mutation orders, while different orders may be associated with distinct biological outcomes. However, less is known about how competing mutation-order pathways jointly determine the clonal structure of the malignant population. We address this gap by characterizing how pathway-specific differences in intermediate-lineage growth, mutational accessibility, and malignant-cell fitness jointly determine the intratumor heterogeneity.

\section{Model}\label{Sec:Formulation}

We formulate cancer initiation and progression as a continuous-time multitype branching process initiated by a single wild-type cell. Cells divide, die, and may acquire driver mutations during division, thereby generating new cell types. To make the role of mutation order explicit, we consider the minimal setting in which malignant transformation requires the acquisition of two distinct driver mutations. This is the simplest setting that allows distinct mutational routes to converge on the same malignant driver genotype. Although the analysis below focuses on two driver mutations, the framework can be extended to settings in which malignant progression requires more than two driver mutations.

The process has five cell types, indexed by $\mathcal{S}=\{s_{\emptyset},s_1,s_2,s_{12},s_{21}\}$. The type \(s_{\emptyset}\) represents wild-type cells, whereas \(s_1\) and \(s_2\) represent cells carrying only mutation 1 and only mutation 2, respectively. Both \(s_{12}\) and \(s_{21}\) represent malignant cells carrying the two driver mutations, where the subscripts record the order in which the mutations were acquired. Specifically, cells of type \(s_{12}\) acquired mutation 1 before mutation 2, whereas cells of type \(s_{21}\) acquired mutation 2 before mutation 1. Thus, although these two types have the same modeled driver genotype, they correspond to distinct evolutionary pathways.

For each \(x\in\mathcal S\), let \(Z_x(t)\) denote the number of cells of type \(x\) at time \(t\). The process is initialized by a single wild-type cell, i.e., $Z_{s_{\emptyset}}(0)=1$ and $Z_{s_1}(0)=Z_{s_2}(0)=Z_{s_{12}}(0)=Z_{s_{21}}(0)=0$. A cell of type \(x\) undergoes symmetric division, producing two cells of type \(x\), at rate \(a_x\), and dies at rate \(b_x\). The corresponding net growth rate is $\lambda_x=a_x-b_x$, which we assume to be positive for every \(x\in\mathcal S\). Mutations are modeled as division events independent of the birth and death events that produce one daughter cell in the parental state and one daughter cell in a mutant state. Specifically, a cell in state \(x\) can divide into one cell of state \(x\) and one cell of state \(y\), with \(x\neq y\). The admissible transitions depend on the model under consideration. In the full two-driver model, the allowed mutation transitions are $s_{\emptyset}\to s_1$, $s_{\emptyset}\to s_2$, $s_1\to s_{12}$, and $s_2\to s_{21}$ with corresponding rates $\mu_{\emptyset,1}$, $\mu_{\emptyset,2}$, $\mu_{1,12}$, and $\mu_{2,21}$. All other mutation rates are set to zero unless stated otherwise.

\textbf{Deterministic order model.} 
As a benchmark, we first consider a linear, stepwise progression model in which mutation 1 must be acquired before mutation 2. Specifically, we assume $\mu_{\emptyset,1}>0$ and $\mu_{1,12}>0$, and set all other mutation rates to zero. The only admissible mutational pathway is therefore
\[
s_{\emptyset}\longrightarrow s_1\longrightarrow s_{12}.
\]
Thus, every malignant cell belongs to type \(s_{12}\) and arises through the same mutational sequence. The term deterministic refers here to the prescribed order of mutation acquisition; the timing of mutation and the subsequent population dynamics remain stochastic. This model provides a benchmark for examining how the relative fitness effects of mutations acquired at different stages of progression influence malignant intratumor heterogeneity. In particular, we compare the cases $\lambda_{s_{12}}-\lambda_{s_1}>\lambda_{s_1}-\lambda_{s_{\emptyset}}$ and $\lambda_{s_{12}}-\lambda_{s_1}<\lambda_{s_1}-\lambda_{s_{\emptyset}}$. In the first case, acquisition of the second mutation confers a larger increase in net growth rate than acquisition of the first mutation. In the second case, acquisition of the first mutation confers the larger increase.

\textbf{Random order model.} 
We next allow the two driver mutations to be acquired in either order. The nonzero mutation rates are $\mu_{\emptyset,1}>0$, $\mu_{\emptyset,2}>0$, $\mu_{1,12}>0$, $\mu_{2,21}>0$, giving rise to two admissible mutational pathways:
\[
s_{\emptyset}\longrightarrow s_1\longrightarrow s_{12},
\qquad
s_{\emptyset}\longrightarrow s_2\longrightarrow s_{21}.
\]
Thus, malignant cells may arise through either mutational sequence, and the pathway followed by any given lineage is determined stochastically. For technical reasons, we assume that the acquisition of an additional driver mutation increases the net growth rate along either pathway, i.e., $\lambda_{s_{\emptyset}} < \lambda_{s_{1}} < \lambda_{s_{12}}$ and $\lambda_{s_{\emptyset}} < \lambda_{s_{2}} < \lambda_{s_{21}}$.

\textbf{Intratumor heterogeneity.}
In the random-order models, malignant cells can arise through either the \((1,2)\) pathway or the \((2,1)\) pathway. For notational convenience, we write $Z_{12}(t)=Z_{s_{12}}(t)$ and $Z_{21}(t)=Z_{s_{21}}(t)$. We use $X_i(t)$ to denote the number of cells at time $t$ in the $i$-th malignant clone arising along the (1, 2) pathway, and $Y_j(t)$ to denote the number of cells at time $t$ in the $j$-th malignant clone arising along the (2, 1) pathway. The total number of malignant cells at time \(t\) is $M(t)=Z_{12}(t)+Z_{21}(t)$. Conditional on the event \(M(t)>0\), define the pathway fractions $\pi_{12}(t)=\frac{Z_{12}(t)}{M(t)}$ and $\pi_{21}(t)=\frac{Z_{21}(t)}{M(t)}$. These quantities satisfy \(\pi_{12}(t)+\pi_{21}(t)=1\) and measure the representation of the two mutation-order pathways within the malignant population. In the deterministic-order model, $Z_{21}(t)=0$ for all \(t\), and therefore \(\pi_{12}(t)=1\) whenever \(M(t)>0\).

\[
R_t
=
\frac{
\sum_i X_i(t)^2+\sum_j Y_j(t)^2
}{
M(t)^2
}
=
\frac{
\sum_i X_i(t)^2+\sum_j Y_j(t)^2
}{
\left(Z_{12}(t)+Z_{21}(t)\right)^2
}.
\]
Equivalently, \(R_t\) is the probability that two malignant cells sampled independently and uniformly, with replacement, from the population at time \(t\) belong to the same clone. A larger value of \(R_t\) indicates greater clonal concentration, whereas a smaller value indicates greater clonal diversity.

We also define the within-pathway Simpson's indices $R_{(12),t}
=\sum_i\left(\frac{X_i(t)}{Z_{12}(t)}\right)^2$ when \(Z_{12}(t)>0\), and $R_{(21),t}=\sum_j\left(\frac{Y_j(t)}{Z_{21}(t)}\right)^2$ when \(Z_{21}(t)>0\). When a pathway contains no malignant cells at time \(t\), its within-pathway index is set to zero by convention. The Simpson's index of the total malignant population then admits the decomposition
\begin{align*}
    R_t& =
R_{(12),t}
\left(
\frac{Z_{12}(t)}
{Z_{12}(t)+Z_{21}(t)}
\right)^2
+
R_{(21),t}
\left(
\frac{Z_{21}(t)}
{Z_{12}(t)+Z_{21}(t)}
\right)^2\\
& = R_{(12),t}\pi_{12}(t)^2+R_{(21),t}\pi_{21}(t)^2.
\end{align*}
The squared pathway fractions describe how malignant cells are distributed across the two evolutionary pathways, whereas the within-pathway indices describe how cells are distributed among independently founded clones within each pathway. This decomposition therefore identifies two channels through which pathway-specific evolutionary dynamics shape the malignant intratumor heterogeneity: they determine the relative representation of the two pathways and the number and relative sizes of malignant clones arising within each pathway.

\section{Results}\label{Sec:Results}

In this section, we present our main results. Since Simpson's index depends on the full collection of malignant clone sizes, its exact analysis is generally difficult. We therefore study the deterministic-order and random-order models introduced in Section~\ref{Sec:Formulation} using the \emph{successive exponential approximation} developed by \citealt{durrett2010evolution} and \citealt{durrett2011intratumor}. The approximation replaces the population producing each new mutational wave by an exponential growth trajectory with a random amplitude. This amplitude retains the effects of early stochastic fluctuations, while the exponential trajectory simplifies the subsequent production of mutant clones.

Conditional on nonextinction of the wild-type lineage, we approximate the wild-type population by
\[
Z^*_{s_{\emptyset}}(t)
=
V_0e^{\lambda_{s_{\emptyset}}t},
\qquad t\in\mathbb{R},
\]
where \(V_0\) is exponentially distributed with rate
\(\lambda_{s_{\emptyset}}/a_{s_{\emptyset}}\). This representation is motivated by the long-time exponential growth of a surviving birth--death process, with \(V_0\) capturing stochastic variation during its early expansion. Conditional on \(V_0\), the approximating wild-type population follows a deterministic exponential trajectory and produces single-mutant founders through independent Poisson processes with intensities
\(\mu_{\emptyset,1}V_0e^{\lambda_{s_{\emptyset}}t}\) and
\(\mu_{\emptyset,2}V_0e^{\lambda_{s_{\emptyset}}t}\). Each founder then initiates an independent birth--death lineage. The approximation is applied recursively by aggregating the long-time contributions of these lineages and using the resulting random amplitudes to represent the single-mutant populations as
\[
Z^*_{s_1}(t)
=
V_{s_1}e^{\lambda_{s_1}t},
\qquad
Z^*_{s_2}(t)
=
V_{s_2}e^{\lambda_{s_2}t},
\qquad t\in\mathbb{R}.
\]
Here, \(V_{s_1}\) and \(V_{s_2}\) incorporate randomness in mutation times, lineage survival, and early stochastic expansion. Their construction and distributions are derived in the Appendix. These approximating populations, in turn, generate malignant founders at rates
\(\mu_{1,12}V_{s_1}e^{\lambda_{s_1}t}\) and
\(\mu_{2,21}V_{s_2}e^{\lambda_{s_2}t}\), while individual malignant clones retain their stochastic birth--death dynamics.

This construction makes the analysis tractable because, conditional on the source amplitudes, clone-founding times form Poisson processes with explicit exponential intensities. Extending the exponential trajectories to all \(t\in\mathbb{R}\) is an additional idealization that removes the finite starting-time boundary. Together, these steps yield explicit Poisson point-process representations of the limiting rescaled clone sizes, whose Laplace transforms allow us to evaluate the limiting expected Simpson's index. Unless otherwise stated, all asymptotic results in this section, including the clone size $X_i(t)$ and $Y_j(t)$, and the limiting Simpson's indices, are derived under this successive exponential approximation. For notational simplicity, we suppress the superscript \(*\) in what follows.

\subsection{Deterministic order model}

We first analyze the deterministic-order model, in which mutation 1 must be acquired before mutation 2. All malignant cells therefore arise through the \((1,2)\) pathway and are of type \(s_{12}\). Recall that \(X_i(t)\) denotes the size of the \(i\)-th malignant clone at time \(t\). The total malignant population is $Z_{12}(t)=\sum_i X_i(t)$, and, conditional on \(Z_{12}(t)>0\), its Simpson's index is $R_t=\sum_i\left(\frac{X_i(t)}{Z_{12}(t)}\right)^2$. By convention, we set \(R_t=0\) when \(Z_{12}(t)=0\).

Under the successive exponential approximation, the \(s_1\) population  grows as \(V_{s_1}e^{\lambda_{s_1}t}\) and seeds malignant \(s_{12}\)  clones at an exponentially increasing rate, while each established  \(s_{12}\) clone grows at rate \(\lambda_{s_{12}}\).  Lemma~\ref{lem:one_step_clone_process} in the Appendix, applied to the  transition \(s_1\to s_{12}\), establishes that the rescaled finite-time  \(s_{12}\)-clone-size process converges vaguely to its limiting Poisson  point process. The single-wave result of  \citealt{durrett2011intratumor} gives the expected Simpson's index of this limiting point process. The one-pathway specialization of Lemma~\ref{lem:simpson_convergence} in the Appendix then shows that the expected Simpson's index of the finite-time population converges to this limiting expectation. Together, these results yield the following proposition.
\begin{proposition}
\label{lem:deterministic_simpson}
Suppose that $0<\lambda_{s_{\emptyset}}<\lambda_{s_1}<\lambda_{s_{12}}$. Under the successive exponential approximation,
\[
\lim_{t\to\infty}
\mathbb{E}\!\left[
R_t
\right]
=
1-\lambda_{s_1}/\lambda_{s_{12}}.
\]
\end{proposition}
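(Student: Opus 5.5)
The plan follows the point-process route sketched just before the statement, in three steps.

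\textbf{Step 1: Poisson limit.} Condition on $V_{s_1}$. Malignant founders then arrive as a Poisson process on $\mathbb{R}$ with intensity $\mu_{1,12}V_{s_1}e^{\lambda_{s_1}s}$. Each founder born at time $s$ starts an independent birth--death clone with rates $(a_{s_{12}},b_{s_{12}})$. Such a clone survives with probability $\lambda_{s_{12}}/a_{s_{12}}$, and then
\[
e^{-\lambda_{s_{12}}(t-s)}X(t)\to W,
\]
where $W$ is exponential with rate $\lambda_{s_{12}}/a_{s_{12}}$.

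Rescale clone sizes by $e^{\lambda_{s_{12}}t}$ and thin to surviving clones. Changing variables $y=We^{-\lambda_{s_{12}}s}$ turns the clone set into a Poisson point process on $(0,\infty)$ with mean measure $C\,y^{-\alpha-1}\,dy$. Here $\alpha=\lambda_{s_1}/\lambda_{s_{12}}\in(0,1)$, and $C$ depends on $V_{s_1}$ and the rates. This uses $\mathbb{E}[W^{\alpha}]<\infty$ and the standard computation
\[
\int_{\mathbb{R}} e^{\lambda_{s_1}s}\,\mathbb{P}\!\left(We^{-\lambda_{s_{12}}s}>x\right)ds = c\,x^{-\alpha}.
\]
Vague convergence of the finite-time rescaled process to this limit is exactly Lemma~\ref{lem:one_step_clone_process} applied to $s_1\to s_{12}$. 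I will invoke it rather than reprove it. The hypothesis $\lambda_{s_1}<\lambda_{s_{12}}$ is what makes $\alpha<1$, so the total mass $\sum y_i$ is finite a.s.

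\textbf{Step 2: limiting Simpson index.} Write $R_\infty=\sum y_i^2/(\sum y_i)^2$ for the limit process. Use the identity
\[
\frac{1}{S^2}=\int_0^\infty u e^{-uS}\,du .
\]
By the Mecke/Campbell formula,
\[
\mathbb{E}\sum_i y_i^2 e^{-u\sum_j y_j}=\int_0^\infty y^2 e^{-uy}\,C y^{-\alpha-1}\,dy \cdot \mathbb{E}e^{-uS}.
\]
The Laplace functional gives
\[
\mathbb{E}e^{-uS}=\exp\!\left(-C\Gamma(1-\alpha)u^{\alpha}/\alpha\right),
\]
and the first factor equals $C\Gamma(2-\alpha)u^{\alpha-2}$. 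Set $K=C\Gamma(1-\alpha)/\alpha$ and $v=Ku^\alpha$. Using $\Gamma(2-\alpha)=(1-\alpha)\Gamma(1-\alpha)$, the integral reduces to
\[
(1-\alpha)\int_0^\infty e^{-v}\,dv=1-\alpha .
\]
This holds for every $C$, so the randomness of $V_{s_1}$ integrates out. The result is $\mathbb{E}R_\infty=1-\lambda_{s_1}/\lambda_{s_{12}}$, which agrees with \citealt{durrett2011intratumor}.

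\textbf{Step 3: convergence of expectations.} It remains to show $\mathbb{E}[R_t]\to\mathbb{E}[R_\infty]$. Since $0\le R_t\le 1$, bounded convergence suffices once $R_t\to R_\infty$ in distribution. This is the main obstacle. The map from a point configuration to $\sum y^2/(\sum y)^2$ is not vaguely continuous, for two reasons. First, $\sum y_i$ depends on the infinitely many small clones near $0$. Second, there is the boundary event $M(t)=0$ with the convention $R_t=0$.

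To handle this I would truncate at level $\varepsilon$, so that clones of size above $\varepsilon$ give a vaguely continuous functional. The small-clone contribution is controlled in $L^1$ uniformly in $t$ by
\[
\int_0^\varepsilon y\,C y^{-\alpha-1}\,dy=O(\varepsilon^{1-\alpha}),
\]
together with the matching finite-time first-moment bound. This uses $\alpha<1$ again. Conditional on nonextinction, $\mathbb{P}(M(t)=0)\to0$, which disposes of the boundary event. This is precisely the one-pathway specialization of Lemma~\ref{lem:simpson_convergence}, which I would cite to finish.
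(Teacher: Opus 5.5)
Your proposal is correct and follows the paper's route: Lemma~\ref{lem:one_step_clone_process} gives the Poisson limit, the limiting point process satisfies $\mathbb{E}[R_\infty]=1-\lambda_{s_1}/\lambda_{s_{12}}$, and the one-pathway specialization of Lemma~\ref{lem:simpson_convergence} gives convergence of expectations. The only difference is that the paper cites Equation~(11) of \citealt{durrett2011intratumor} for $\mathbb{E}[R_\infty]$, whereas you derive it from the identity $1/S^2=\int_0^\infty u e^{-uS}\,du$ and the Mecke formula---the same computation the paper carries out for two pathways in Lemma~\ref{lem:limiting_expected_simpson}---which makes your version self-contained.
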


To express this result in terms of the fitness gains conferred by the two mutations, define $f_1=\lambda_{s_1}-\lambda_{s_{\emptyset}}$ and $f_2=\lambda_{s_{12}}-\lambda_{s_1}$. Here, \(f_1\) is the increase in net growth rate following acquisition of the first mutation, while \(f_2\) is the increase following acquisition of the second mutation. Since $\lambda_{s_{12}}=\lambda_{s_{\emptyset}}+f_1+f_2$, Proposition~\ref{lem:deterministic_simpson} gives
\begin{equation}
\label{eqn:limiting_simpson_deterministic}
\lim_{t\to\infty}
\mathbb{E}\!\left[
R_t
\right]
=
\frac{f_2}
{\lambda_{s_{\emptyset}}+f_1+f_2}.
\end{equation}
Equation~\eqref{eqn:limiting_simpson_deterministic} identifies two opposing within-pathway effects of stage-specific selection. Indeed,
\[
    \frac{\partial}{\partial f_1}
    \left(
    \frac{f_2}
    {\lambda_{s_{\emptyset}}+f_1+f_2}
    \right)
    =
    -\frac{f_2}
    {\left(\lambda_{s_{\emptyset}}+f_1+f_2\right)^2}
    <0,
\]
whereas
\[
    \frac{\partial}{\partial f_2}
    \left(
    \frac{f_2}
    {\lambda_{s_{\emptyset}}+f_1+f_2}
    \right)
    =
    \frac{\lambda_{s_{\emptyset}}+f_1}
    {\left(\lambda_{s_{\emptyset}}+f_1+f_2\right)^2}
    >0.
\]
We refer to the effect of changing $f_1$ as the \emph{early-selection diversification effect}. A larger first-stage fitness gain \(f_1\) accelerates the expansion of the intermediate population and causes the supply of new malignant founders to increase more rapidly. Later-founded malignant clones can therefore retain greater representation, leading to a more even clone-size distribution and a smaller Simpson's index. We refer to the effect of changing $f_2$ as the \emph{late-selection concentration effect}. A larger second-stage fitness gain \(f_2\) widens the growth-rate difference between established malignant clones and the intermediate population that produces them. This strengthens the growth and age advantage of early-founded malignant clones and increases the Simpson's index. Here, ``early'' and ``late'' refer to the position of selection in the mutational sequence rather than to calendar time. These two effects arise within a single mutational pathway. They show that the conventional selective-sweep intuition that a stronger driver necessarily reduces diversity (\citealt{smith1974hitch}) is incomplete: selection at the final transition promotes clonal concentration, whereas selection at an earlier stage can promote diversity by expanding the population from which independent malignant clones are founded.

Proposition~\ref{lem:deterministic_simpson} also shows that, although the mutation rate controls the scale of the malignant-clone founding process, it does not enter the limiting Simpson's index. A higher mutation rate increases the rate at which malignant clones are founded, so malignant clones tend to arise earlier and more clones are present at any fixed time. However, it does not change the rate at which the founding intensity grows relative to the growth rate of established malignant clones. Consequently, the limiting index depends on the relative growth rates of the intermediate and malignant populations, \(\lambda_{s_1}/\lambda_{s_{12}}\), rather than on the mutation rate. The mutation rate nevertheless affects the timing of malignant progression and the number of malignant clones observed over finite time horizons.

\begin{figure}[h]
    \centering
    \includegraphics[width=0.75\textwidth]
    {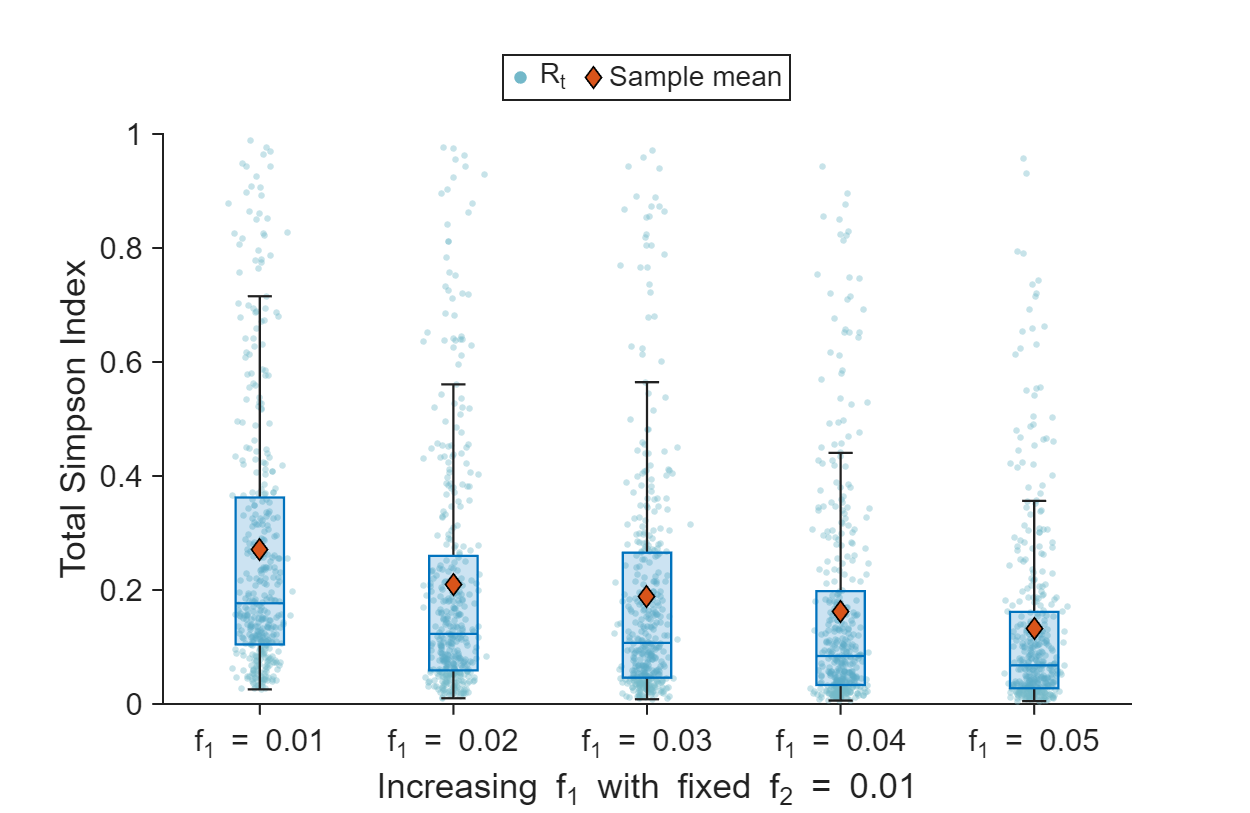}
    \caption{
    Finite-time distributions of the Simpson's index of the malignant population under the deterministic-order model. The second-stage fitness gain is fixed at \(f_2=0.01\), whereas the first-stage fitness gain varies over \(f_1\in\{0.01,0.02,0.03,0.04,0.05\}\). The death rate is set to \(0\), and both admissible mutation rates are set to \(\mu=10^{-3}\). Each boxplot summarizes 400 independent simulation realizations. The horizontal line inside each box represents the sample median, while the lower and upper bounds of the box denote the 25th and 75th percentiles, respectively. Red diamonds indicate sample means, and individual jittered points represent single simulation realizations.
}
    \label{fig:deterministic_order_clone_simpson}
\end{figure}

We simulate the deterministic-order model using the Gillespie algorithm. Each realization is initialized with a single wild-type cell and stopped when the total cell population first reaches \(10^8\). Figure~\ref{fig:deterministic_order_clone_simpson} provides a finite-time illustration of the early-selection diversification effect. With \(f_2\) fixed, equation~\eqref{eqn:limiting_simpson_deterministic} implies that the limiting expected Simpson's index is strictly decreasing in \(f_1\). A larger \(f_1\) accelerates the expansion of the intermediate \(s_1\) population, causing the founding intensity of new malignant clones to 
increase more rapidly relative to the growth of established clones and thereby allowing later-founded clones to retain greater representation. Consistent with this prediction, as \(f_1\) increases from \(0.01\) to \(0.05\), the sample mean of the Simpson's index decreases from \(0.270\) to \(0.133\).

\subsection{Random Order Model}

We next consider the random-order model, in which malignant cells may arise through either the \(s_{\emptyset}\to s_1\to s_{12}\) pathway or the \(s_{\emptyset}\to s_2\to s_{21}\) pathway. We first focus on the case in which the two mutation orders produce malignant cell types with the same net growth rate, i.e., $\lambda_{s_{12}}=\lambda_{s_{21}}=\lambda$. We refer to this setting as the \emph{commutative-fitness regime}, because the net growth rate of a malignant cell depends on the set of acquired mutations but not on their order of acquisition. In this regime, both malignant populations grow on the same exponential scale. Neither pathway becomes asymptotically negligible, and their relative contributions are instead determined by the mutation and growth dynamics at the preceding stages of progression.

Recall that the Simpson's index of the total malignant population can be decomposed as $R_t=R_{(12),t}\pi_{12}(t)^2+R_{(21),t}\pi_{21}(t)^2$. This decomposition separates the clonal diversity within each pathway from the representation of that pathway in the malignant population. Although Proposition~\ref{lem:deterministic_simpson} gives the limiting within-pathway index when either pathway is considered in isolation, these two single-pathway limits cannot be substituted directly into the decomposition. For each pathway, the within-pathway index and the corresponding pathway fraction depend on the same malignant clone sizes and are therefore dependent. In addition, the two pathway fractions are coupled through the total malignant population in their common denominator. The following theorem accounts for these  dependencies and derives an exact expression for the limiting expected Simpson's index of the combined malignant population.

\begin{theorem}\label{thm:expected_simpson_additive}
For any admissible mutation \(s_i\to s_j\) with
\(\lambda_{s_i}<\lambda_{s_j}\), define
\begin{equation}
\label{eq:H_constant}
H_{i,j}
=
\frac{1}{a_{s_j}}
\left(
\frac{a_{s_j}}{\lambda_{s_j}}
\right)^{\lambda_{s_i}/\lambda_{s_j}}
\frac{\pi}{\sin\left(\pi\lambda_{s_i}/\lambda_{s_j}\right)} .
\end{equation}
Let
\begin{equation}
\label{eq:A_constants}
A_{12}
=
\mu_{\emptyset,1}H_{\emptyset,1}
\left(
\mu_{1,12}H_{1,12}
\right)^{\lambda_{s_{\emptyset}}/\lambda_{s_1}} \text{ and } A_{21}=\mu_{\emptyset,2}H_{\emptyset,2}\left(
\mu_{2,21}H_{2,21}\right)^{\lambda_{s_{\emptyset}}/\lambda_{s_2}} .
\end{equation}
Then, under the successive exponential approximation,
\begin{equation}
\label{eq:expected_simpson_additive}
\lim_{t\to\infty}\mathbb E[R_t]
=
\frac{
A_{12}\left(1-\frac{\lambda_{s_1}}{\lambda}\right)
+
A_{21}\left(1-\frac{\lambda_{s_2}}{\lambda}\right)
}{
A_{12}+A_{21}
}.
\end{equation}
\end{theorem}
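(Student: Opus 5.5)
The plan is to reduce the statement to a computation with the limiting point processes, and then evaluate that computation with a Laplace-transform identity. First, apply Lemma~\ref{lem:one_step_clone_process} separately to the transitions $s_1\to s_{12}$ and $s_2\to s_{21}$. Conditional on the amplitudes $(V_{s_1},V_{s_2})$, this gives vague convergence of the rescaled clone sizes $\{e^{-\lambda t}X_i(t)\}$ and $\{e^{-\lambda t}Y_j(t)\}$ to two Poisson point processes $\Xi_{12}$ and $\Xi_{21}$. These are conditionally independent given $(V_{s_1},V_{s_2})$, and their intensities have the single-wave form of \citealt{durrett2011intratumor}:
\[
\nu_{12}(dx)=c_{12}\,\mu_{1,12}V_{s_1}\,\alpha_1 x^{-\alpha_1-1}\,dx,
\qquad
\nu_{21}(dx)=c_{21}\,\mu_{2,21}V_{s_2}\,\alpha_2 x^{-\alpha_2-1}\,dx,
\]
with $\alpha_1=\lambda_{s_1}/\lambda$ and $\alpha_2=\lambda_{s_2}/\lambda$. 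The constants are such that $c\,\Gamma(1-\alpha)$ reproduces the factor $H_{1,12}$, respectively $H_{2,21}$. The two-pathway version of Lemma~\ref{lem:simpson_convergence} then identifies $\lim_t\mathbb E[R_t]$ with $\mathbb E\bigl[\sum_{x\in\Xi_{12}\cup\Xi_{21}}x^2/S^2\bigr]$, where $S$ is the total mass of $\Xi_{12}\cup\Xi_{21}$. I take that lemma as given, so the remaining work is to evaluate this limiting expectation.

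To evaluate it, write $1/S^2=\int_0^\infty s\,e^{-sS}\,ds$ and apply the Mecke (Slivnyak) formula to each Poisson process, conditionally on $V_{s_1}$, $V_{s_2}$ and $V_0$. For the $(1,2)$ part this gives
\[
\mathbb E\Bigl[\sum_{x\in\Xi_{12}}\frac{x^2}{S^2}\Bigr]
=\mathbb E\int_0^\infty s\Bigl(\int x^2e^{-sx}\nu_{12}(dx)\Bigr)e^{-\Psi_{12}(s)-\Psi_{21}(s)}\,ds.
\]
Here $\int x^2e^{-sx}\nu_{12}(dx)=\kappa_{12}V_{s_1}\alpha_1\Gamma(2-\alpha_1)s^{\alpha_1-2}$ and $\Psi_{12}(s)=\kappa_{12}V_{s_1}\Gamma(1-\alpha_1)s^{\alpha_1}$, with the analogous expressions for the $(2,1)$ part. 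The factor $\alpha_1\Gamma(2-\alpha_1)=\alpha_1(1-\alpha_1)\Gamma(1-\alpha_1)$ is the source of $1-\lambda_{s_1}/\lambda$.

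Next, integrate out $V_{s_1}$ and $V_{s_2}$ given $V_0$. By the construction in the Appendix (again via Lemma~\ref{lem:one_step_clone_process}, now applied to $s_\emptyset\to s_i$), $V_{s_1}$ has the one-sided stable Laplace transform $\mathbb E[e^{-\theta V_{s_1}}\mid V_0]=\exp\{-V_0\mu_{\emptyset,1}H_{\emptyset,1}\theta^{\beta_1}\}$ with $\beta_1=\lambda_{s_\emptyset}/\lambda_{s_1}$, and likewise for $V_{s_2}$. We need $\mathbb E[V_{s_1}e^{-\theta V_{s_1}}\mid V_0]$ evaluated at $\theta=\kappa_{12}\Gamma(1-\alpha_1)s^{\alpha_1}$; it is obtained by differentiating this transform in $\theta$.

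The key structural point is that $(s^{\alpha_1})^{\beta_1}=s^{\lambda_{s_\emptyset}/\lambda}=(s^{\alpha_2})^{\beta_2}$, which holds precisely because $\lambda_{s_{12}}=\lambda_{s_{21}}$. Both pathways therefore contribute to the exponent with the same power $\gamma=\lambda_{s_\emptyset}/\lambda$:
\[
\mathbb E[e^{-sS}\mid V_0]=\exp\{-V_0(A_{12}+A_{21})s^\gamma\},
\]
with $A_{12}$ and $A_{21}$ as in \eqref{eq:A_constants}. The derivative step multiplies the $(1,2)$ integrand by $V_0A_{12}\,\beta_1\alpha_1(1-\alpha_1)\,s^{\gamma-\alpha_1}\theta^{-1}$. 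After the $s^{\alpha_1-2}$ factor is combined with this, the $(1,2)$ integrand reduces to $(1-\alpha_1)\,V_0A_{12}\gamma\,s^{\gamma-1}e^{-V_0(A_{12}+A_{21})s^\gamma}$, using $\beta_1\alpha_1=\gamma$. Integrating in $s$ gives exactly $(1-\alpha_1)A_{12}/(A_{12}+A_{21})$, independently of $V_0$. Adding the $(2,1)$ term yields \eqref{eq:expected_simpson_additive}. Since the answer does not depend on $V_0$, no averaging over $V_0$ is required.

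I expect the main obstacle to be bookkeeping of constants: matching the scale constants $\kappa$ and the Laplace exponent of $V_{s_i}$ to the $H_{i,j}$ of \eqref{eq:H_constant}, including the factors $a_{s_j}^{-1}(a_{s_j}/\lambda_{s_j})^{\lambda_{s_i}/\lambda_{s_j}}$ and $\pi/\sin(\pi\lambda_{s_i}/\lambda_{s_j})=\Gamma(\alpha)\Gamma(1-\alpha)$. I will check these first against Proposition~\ref{lem:deterministic_simpson}, which corresponds to $A_{21}=0$. A secondary point is justifying the Fubini interchanges; this follows because every integrand is nonnegative.
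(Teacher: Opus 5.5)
Your proposal is correct and follows the paper's proof essentially step for step. It uses Lemma~\ref{lem:one_step_clone_process} for the conditional Poisson limits and Lemma~\ref{lem:simpson_convergence} to replace $\lim_t\mathbb E[R_t]$ by $\mathbb E[R]$. It then carries out exactly the computation of Lemma~\ref{lem:limiting_expected_simpson}: the identity $z^{-2}=\int_0^\infty re^{-rz}\,dr$, Campbell--Mecke, differentiating the stable Laplace transform of $V_{s_i}$ given $V_0$, and the exponent coincidence $\alpha_1\beta_1=\alpha_2\beta_2=\lambda_{s_\emptyset}/\lambda$ that makes the answer free of $V_0$.

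The only blemish is the intermediate multiplier you quote for the derivative step, which is garbled. The correct expression is $\mathbb E[\mu_{1,12}H_{1,12}V_{s_1}e^{-\theta V_{s_1}}\mid V_0]=V_0A_{12}\beta_1 s^{\gamma-\alpha_1}e^{-V_0A_{12}s^{\gamma}}$, with no factor $\theta^{-1}$. The reduced integrand $(1-\alpha_1)V_0A_{12}\gamma s^{\gamma-1}e^{-V_0(A_{12}+A_{21})s^{\gamma}}$ that you arrive at is nonetheless right.
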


By Theorem \ref{thm:expected_simpson_additive}, the limiting expected Simpson's index is a weighted average of the two single-pathway limits, $1-\frac{\lambda_{s_1}}{\lambda}$ and $1-\frac{\lambda_{s_2}}{\lambda}$. The proof of Theorem~\ref{thm:expected_simpson_additive} more specifically identifies the contribution of each pathway:
\begin{align}
\lim_{t\to\infty}
\mathbb{E}\!\left[
R_{(12),t}\pi_{12}(t)^2
\right]
&=
\frac{A_{12}}{A_{12}+A_{21}}
\left(
1-\frac{\lambda_{s_1}}{\lambda}
\right),
\label{eq:pathway_12_simpson_contribution}
\\
\lim_{t\to\infty}
\mathbb{E}\!\left[
R_{(21),t}\pi_{21}(t)^2
\right]
&=
\frac{A_{21}}{A_{12}+A_{21}}
\left(
1-\frac{\lambda_{s_2}}{\lambda}
\right).
\label{eq:pathway_21_simpson_contribution}
\end{align}
Equations~\eqref{eq:pathway_12_simpson_contribution} and \eqref{eq:pathway_21_simpson_contribution} show that the two within-pathway effects identified in the deterministic-order benchmark continue to govern clonal concentration along each mutation order. The terms $1-\frac{\lambda_{s_1}}{\lambda}$ and $1-\frac{\lambda_{s_2}}{\lambda}$ are the single-pathway limiting indices associated with the \(s_1\to s_{12}\) and \(s_2\to s_{21}\) transitions, respectively. The new feature of the random-order model is that these two within-pathway indices enter the total index with the effective weights $\frac{A_{12}}{A_{12}+A_{21}}$ and $\frac{A_{21}}{A_{12}+A_{21}}$. Changes in mutation rates, growth rates, or birth--death dynamics can therefore alter the total Simpson's index by shifting effective representation between pathways with different within-pathway clonal structures. We refer to this additional channel as the \emph{pathway-weight effect}. It decreases the total index when effective weight shifts toward the pathway with the smaller within-pathway index and increases the total index when weight shifts toward the pathway with the larger within-pathway index.

The constants \(A_{12}\) and \(A_{21}\) determine the pathway-weight effect by combining the mutation rates and birth--death parameters associated with the two stages of each pathway. In particular, \(\mu_{\emptyset,1}\) and \(\mu_{\emptyset,2}\) determine the rates at which the two intermediate populations are founded, whereas \(\mu_{1,12}\) and \(\mu_{2,21}\) determine the rates at which those populations produce malignant founders. For each transition \(s_i\to s_j\), the factor \(H_{i,j}\) summarizes how the growth rate of the \(s_i\) population and the birth--death dynamics of newly founded \(s_j\) lineages determine the asymptotic scale of the resulting \(s_j\) population. The ratio \(\lambda_{s_i}/\lambda_{s_j}\) compares the exponential rate at which the \(s_i\) population, and hence the founding intensity of new \(s_j\) lineages, increases with the rate at which established \(s_j\) lineages grow. The birth rate \(a_{s_j}\), together with the net growth rate \(\lambda_{s_j}=a_{s_j}-b_{s_j}\), determines the survival probability and early stochastic expansion of a newly founded \(s_j\) lineage. Consequently, even cell types with the same net growth rate may contribute differently to the pathway weights if their birth--death turnover rates differ. The exponents \(\lambda_{s_{\emptyset}}/\lambda_{s_1}\) and \(\lambda_{s_{\emptyset}}/\lambda_{s_2}\) arise from applying the exponential approximation recursively across the two successive mutational waves. Thus, \(A_{12}\) and \(A_{21}\) jointly capture the founding and expansion of intermediate lineages and malignant clones along the two mutation-order pathways. The ratios
$\frac{A_{12}}{A_{12}+A_{21}}$ and $\frac{A_{21}}{A_{12}+A_{21}}$therefore serve as effective pathway weights in the limiting expected Simpson's index.

The appearance of mutation rates in these weights marks an important  difference from the deterministic-order benchmark. Along a single pathway, mutation rates do not appear in Proposition~\ref{lem:deterministic_simpson}. In the random-order model, however, pathway-specific mutation rates can change the relative scales of the two malignant populations and thereby shift their relative contributions to the  combined malignant population. Mutation rates therefore affect the overall limiting Simpson's index through the pathway weights, while the within-pathway indices \(1-\frac{\lambda_{s_1}}{\lambda}\) and \(1-\frac{\lambda_{s_2}}{\lambda}\) remain determined solely by the corresponding growth-rate ratios.

Because the two weights are positive and sum to one, the limiting expected Simpson's index lies between the two single-pathway limits:
\[
\min\left\{
1-\frac{\lambda_{s_1}}{\lambda},
1-\frac{\lambda_{s_2}}{\lambda}
\right\}
\leq
\lim_{t\to\infty}\mathbb E[R_t]
\leq
\max\left\{
1-\frac{\lambda_{s_1}}{\lambda},
1-\frac{\lambda_{s_2}}{\lambda}
\right\}.
\]
For example, suppose that \(\lambda_{s_1}>\lambda_{s_2}\). The \((1,2)\) pathway then has a smaller limiting within-pathway Simpson's index because its intermediate population produces malignant founders at a rate closer to the growth rate of established malignant clones. Holding the growth rates fixed, an increase in \(A_{12}\) relative to \(A_{21}\) places greater weight on this more diverse pathway and reduces the limiting Simpson's index. Conversely, an increase in the effective contribution of the \((2,1)\) pathway raises the limiting index. Thus, pathway-specific mutation and growth parameters affect overall clonal concentration through their interaction with the clone-size distribution generated within each pathway.

In the special case $\lambda_{s_1}=\lambda_{s_2}$, the two pathways have the same limiting within-pathway Simpson's index. Equation~\eqref{eq:expected_simpson_additive} then reduces to $\lim_{t\to\infty}\mathbb E[R_t]=1-\frac{\lambda_{s_1}}{\lambda}$, independently of $A_{12}$ and $A_{21}$, which is the same result as in the deterministic-order model. Mutation rates may still change the relative representation of the two pathways and the numbers of malignant clones observed at finite times. However, such changes do not affect the limiting expected Simpson's index when the intermediate populations have the same growth rate, consistent with our finding in the deterministic-order model that mutation rates do not affect this limit.

\subsection{Special Cases and Extensions of Theorem \ref{thm:expected_simpson_additive}}

\textbf{Special case I: additive fitness effects.}
We now apply Theorem~\ref{thm:expected_simpson_additive} to the random-order model under \emph{additive fitness effects}. With a slight abuse of notation, define $f_1:=\lambda_{s_1}-\lambda_{s_{\emptyset}}$ and $f_2:=\lambda_{s_2}-\lambda_{s_{\emptyset}}$, where \(f_1\) and \(f_2\) are the increases in net growth rate conferred by mutations 1 and 2, respectively. Additivity assumes that each mutation confers the same fitness gain regardless of whether it is acquired before or after the other mutation. Accordingly, $\lambda_{s_{12}}-\lambda_{s_1}=f_2$ and $\lambda_{s_{21}}-\lambda_{s_2}=f_1$. It follows that $\lambda_{s_{12}}=\lambda_{s_{21}}=\lambda_{s_{\emptyset}}+f_1+f_2$, so the additive-fitness model is a special case of the commutative-fitness regime. 

Additive fitness effects provide a biologically motivated benchmark when two driver mutations have approximately independent effects on cell growth. For example, additive proliferative effects of \textit{ORAOV1} and \textit{CCND1} co-expression have been reported in engineered squamous cell carcinoma models, suggesting that these two co-amplified oncogenes can promote cancer-cell growth through distinct mechanisms (\citealt{mahieu2024oraov1}). Although this evidence does not directly establish background-independent increments in net population growth, it motivates additivity as a useful approximation in selected biological settings. 

Substituting the definitions of $f_1$ and $f_2$ into Theorem~\ref{thm:expected_simpson_additive}
gives
\begin{equation}
\label{eq:additive_simpson_special}
\lim_{t\to\infty}\mathbb E[R_t]
=
\frac{
A_{12}\frac{f_2}{\lambda}
+
A_{21}\frac{f_1}{\lambda}
}
{A_{12}+A_{21}} .
\end{equation}
The additive-fitness model connects the two stage-specific effects identified in the deterministic-order benchmark through the identity of the driver mutation. Along the \((1,2)\) pathway, mutation \(1\) is acquired at the intermediate stage and mutation \(2\) is acquired at the final transition. Along the \((2,1)\) pathway, their positions are reversed. Consequently, an increase in \(f_1\) activates the early-selection diversification effect along the \((1,2)\) pathway and the late-selection concentration effect along the \((2,1)\) pathway. At the same time, the change in \(f_1\) may alter \(A_{12}\) and \(A_{21}\), generating a pathway-weight effect. Additivity therefore provides a particularly transparent setting in which the two within-pathway effects and the between-pathway effect can be studied jointly.

To examine these effects, fix \(f_2\), \(\lambda_{s_{\emptyset}}\), and all primitive model parameters not determined by \(f_1\), and define $\overline R(f_1)=\lim_{t\to\infty}\mathbb E[R_t]$, $\omega_{12}(f_1)=\frac{A_{12}}{A_{12}+A_{21}}$. Equation~\eqref{eq:additive_simpson_special} can then be written as $\overline R(f_1)=\omega_{12}(f_1)\frac{f_2}{\lambda}+\bigl(1-\omega_{12}(f_1)\bigr)\frac{f_1}{\lambda}$, where $\lambda=\lambda_{s_{\emptyset}}+f_1+f_2$. Differentiating with respect to \(f_1\) gives
\begin{equation}
\label{eq:additive_simpson_sensitivity}
\frac{d\overline R(f_1)}{df_1}
=
\underbrace{
\bigl(1-\omega_{12}(f_1)\bigr)
\frac{\lambda_{s_{\emptyset}}+f_2}{\lambda^2}
}_{\text{late-selection concentration effect}}
-
\underbrace{
\omega_{12}(f_1)\frac{f_2}{\lambda^2}
}_{\text{early-selection diversification effect}}
-
\underbrace{
\omega_{12}'(f_1)\frac{f_1-f_2}{\lambda}
}_{\text{pathway-weight effect}}.
\end{equation}
The first two terms in equation~\eqref{eq:additive_simpson_sensitivity} are the stage-specific within-pathway effects identified in the deterministic-order benchmark, now operating along different mutation orders. The positive term is the late-selection concentration effect along the \((2,1)\) pathway, where mutation \(1\) is acquired at the final transition to malignancy. Increasing \(f_1\) widens the growth-rate difference between the \(s_2\) intermediate population and the malignant \(s_{21}\) clones, strengthening the dominance of early-founded malignant clones. The second term is the early-selection diversification effect along the \((1,2)\) pathway, where mutation \(1\) is acquired at the intermediate stage. Increasing \(f_1\) accelerates the expansion of the \(s_1\) population and causes the supply of new malignant founders to grow more rapidly, thereby reducing clonal concentration within this pathway. The final term is the pathway-weight effect introduced by random mutation order. When \(f_1>f_2\), the \((1,2)\) pathway has the smaller within-pathway Simpson's index. An increase in \(\omega_{12}(f_1)\) therefore shifts greater effective weight toward the more diverse pathway and reduces the total index, whereas a decrease in \(\omega_{12}(f_1)\) has the opposite effect.

Suppose that the two pathways are symmetric under interchange of mutations 1 and 2. Then $\left.\omega_{12}(f_1)\right|_{f_1=f_2}=\frac{1}{2}$. Because the pathway-weight term vanishes at \(f_1=f_2\), equation~\eqref{eq:additive_simpson_sensitivity} gives
\[
\left.
\frac{d\overline R(f_1)}{df_1}
\right|_{f_1=f_2}
=
\frac{\lambda_{s_{\emptyset}}}
{2\bigl(\lambda_{s_{\emptyset}}+2f_2\bigr)^2}
>0.
\]
Thus, starting from equal fitness gains, a small increase in \(f_1\) raises the limiting expected Simpson's index. For larger \(f_1\), the derivative becomes negative whenever
\[
\omega_{12}(f_1)f_2
+
\lambda\omega_{12}'(f_1)(f_1-f_2)
>
\bigl(1-\omega_{12}(f_1)\bigr)
\bigl(\lambda_{s_{\emptyset}}+f_2\bigr).
\]
Hence, when the late-selection concentration effect dominates near \(f_1=f_2\), but the combined early-selection diversification and pathway-weight effects dominate for larger \(f_1\), the limiting expected Simpson's index first increases and then decreases. This non-monotonicity shows that intratumor heterogeneity is not determined by the magnitude of a driver's fitness gain alone. It also depends on the stage at which the driver is acquired and on how changes in fitness alter the effective representation of the two mutation-order pathways.

To illustrate how these counteracting effects interact, we fix \(f_2\), vary \(f_1\), and report both finite-time simulation results and the corresponding limiting expected Simpson's index in Figure~\ref{fig:combined_simpson_indices}. For the parameter values considered, the late-selection concentration effect along the \((2,1)\) pathway initially dominates, causing the total index to increase. For larger values of \(f_1/f_2\), the early-selection diversification effect along the \((1,2)\) pathway, together with the pathway-weight effect, becomes dominant, and the total index decreases.

\begin{figure}[h]
    \centering
    \subfloat[
        Finite-time distributions of \(R_t\) and \(R_{(12),t}\).
        \label{fig:additive_random_order_total_vs_R12_boxplot_50}
    ]{
        \includegraphics[width=0.48\textwidth]
        {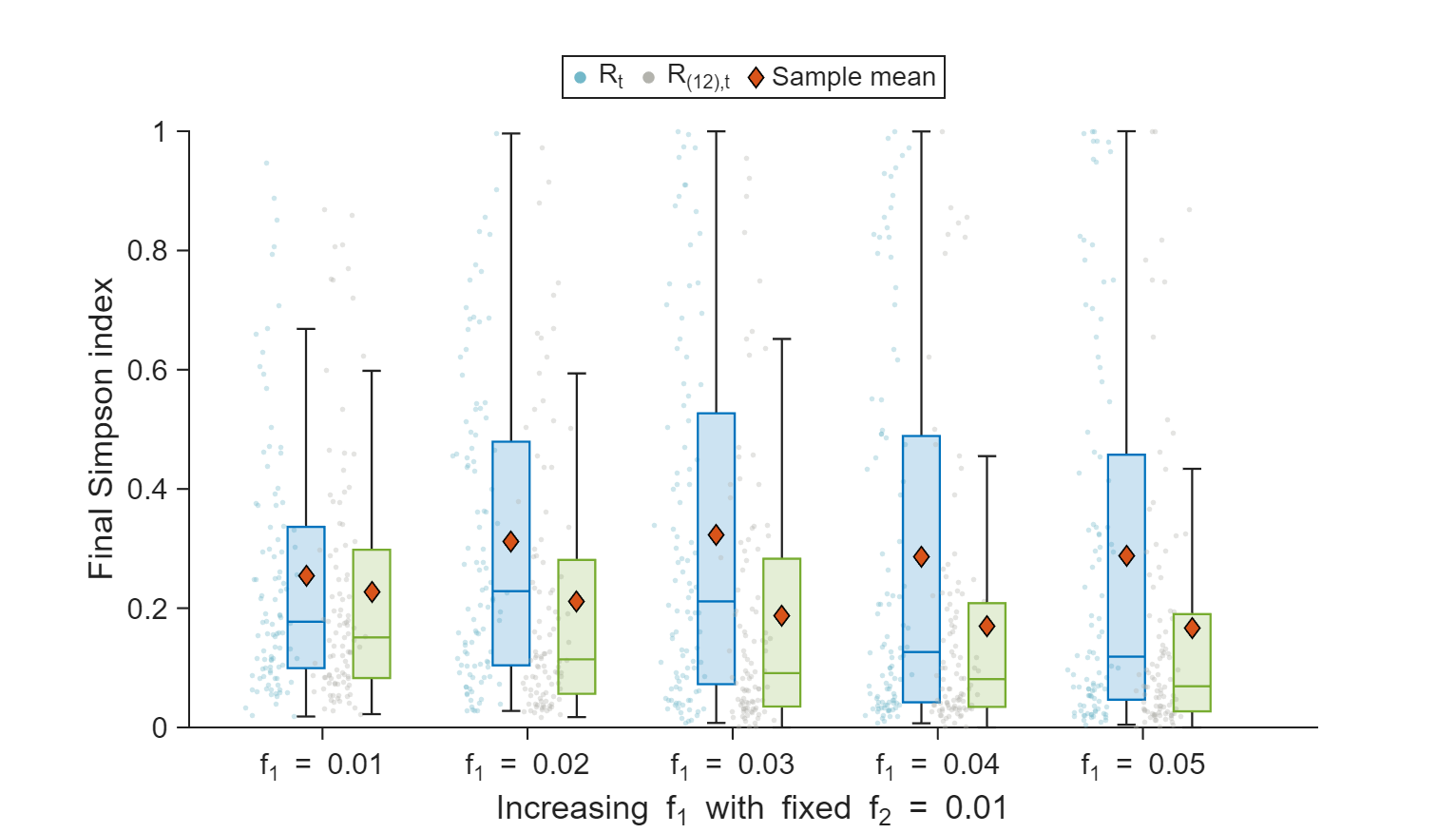}
    }
    \hfill
    \subfloat[
        Limiting expected Simpson's index given by \eqref{eq:additive_simpson_special}.
        \label{fig:Expected_Simpson_index}
    ]{
        \includegraphics[width=0.48\textwidth]
        {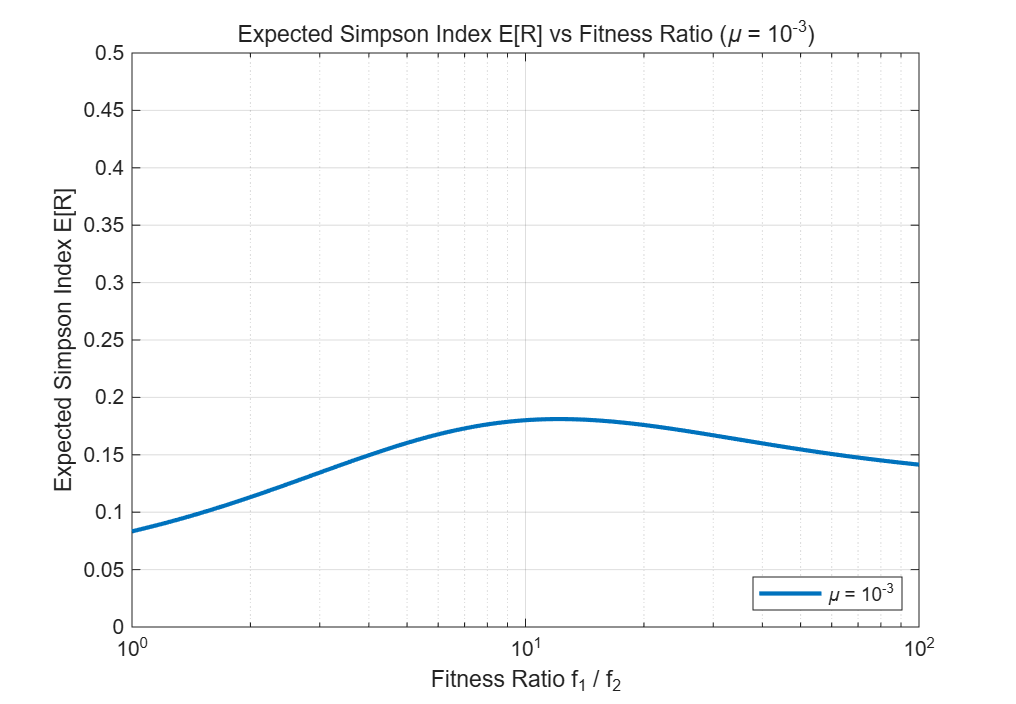}
    }
    \caption{
        Finite-time and limiting Simpson's index under additive fitness effects in the random-order model. Panel~(a) shows the distributions of the Simpson's index \(R_t\) of the total malignant population and the within-pathway Simpson's index \(R_{(12),t}\) at the simulation endpoint, with \(f_2=0.01\) and \(f_1\in\{0.01,0.02,0.03,0.04,0.05\}\). For each value of \(f_1\), the left boxplot corresponds to \(R_t\), whereas the right boxplot corresponds to \(R_{(12),t}\) for malignant clones arising through the \((1,2)\) pathway. Each boxplot summarizes 100 independent simulation realizations. Diamonds denote sample means, and the remaining boxplot conventions and symbols are the same as those in Figure~\ref{fig:deterministic_order_clone_simpson}. As \(f_1\) increases, the within-pathway index decreases, whereas the total-population index first increases and then decreases. Panel~(b) plots the limiting expected Simpson's index given by Equation~\eqref{eq:additive_simpson_special} as a function of the fitness ratio \(f_1/f_2\) over \([1, 10^2]\), with all admissible mutation rates set to \(\mu=10^{-3}\). 
    }
    \label{fig:combined_simpson_indices}
\end{figure}

\textbf{Special case II: commutative but non-additive fitness effects.}

We next consider the random-order model with \emph{commutative but non-additive fitness effects}. In this setting, the selective advantage conferred by a mutation depends on whether the other mutation has already been acquired, even though the two mutation orders ultimately produce malignant cells with the same net growth rate. Such background dependence represents an epistatic
interaction between the two mutations. A canonical biological motivation is the threshold-like inactivation of a tumor suppressor gene. In the classical two-hit model of retinoblastoma, an alteration of one copy of \textit{RB1} creates a predisposed state, whereas tumor initiation typically requires inactivation of the remaining functional copy (\citealt{knudson1971mutation}). Consistent with this mechanism, retinoblastomas generated from human retinal organoids carrying a germline \textit{RB1} alteration were found to have acquired inactivation of the second allele (\citealt{norrie2021retinoblastoma}). 

To represent this threshold-like interaction, suppose that the two single-mutant states have the same net growth rate and that the two double-mutant states also have the same net growth rate. Let \(g_1\) and \(g_2\) denote the fitness advantages of single-mutant and double-mutant cells, respectively, relative to the wild-type population. Specifically, assume $\lambda_{s_1}=\lambda_{s_2}=\lambda_{s_{\emptyset}}+g_1$ and $\lambda_{s_{12}}=\lambda_{s_{21}}=\lambda_{s_{\emptyset}}+g_2$, where \(0<g_1<g_2\). The incremental fitness gain associated with acquiring the second mutation is therefore \(g_2-g_1\). Under symmetric additive fitness effects, the two single mutations would each confer the increment \(g_1\), so the double-mutant advantage would satisfy \(g_2=2g_1\). Allowing \(g_2\neq 2g_1\) makes the interaction non-additive, and the regime $g_2-g_1\gg g_1$ represents a strong positive epistatic effect in which the second alteration has a much larger consequence after the first alteration is already present.

Because
\(\lambda_{s_1}=\lambda_{s_2}=\lambda_{s_{\emptyset}}+g_1\) and
\(\lambda_{s_{12}}=\lambda_{s_{21}}=\lambda_{s_{\emptyset}}+g_2\),
the two pathways have the same limiting within-pathway Simpson's index.
Theorem~\ref{thm:expected_simpson_additive} therefore gives
\begin{align}
\lim_{t\to\infty}\mathbb E[R_t]
&=
\frac{
A_{12}\left(
1-\frac{\lambda_{s_{\emptyset}}+g_1}
        {\lambda_{s_{\emptyset}}+g_2}
\right)
+
A_{21}\left(
1-\frac{\lambda_{s_{\emptyset}}+g_1}
        {\lambda_{s_{\emptyset}}+g_2}
\right)
}{
A_{12}+A_{21}
}
\nonumber\\
&=
1-\frac{\lambda_{s_{\emptyset}}+g_1}
        {\lambda_{s_{\emptyset}}+g_2}
=
\frac{g_2-g_1}{\lambda_{s_{\emptyset}}+g_2}.
\label{eq:nonadditive_symmetric_simpson}
\end{align}
When the fitness effect of a single alteration is small relative to that of the combined alterations, \(g_1\ll g_2\), equation \eqref{eq:nonadditive_symmetric_simpson} becomes
\[
    \lim_{t\to\infty}\mathbb E[R_t]
    \approx
    \frac{g_2}{\lambda_{s_{\emptyset}}+g_2}.
\]
The limiting index is therefore close to one when the double-mutant growth rate is much larger than the single-mutant growth rate. Biologically, a weak first-stage fitness effect limits the expansion of the single-mutant populations and hence the rate at which independent double-mutant founders are produced. Once a
double-mutant clone is founded, however, its substantially larger growth rate gives early-founded clones a strong and persistent advantage over clones founded later. The malignant population can consequently become dominated by one or a few independently founded double-mutant lineages, producing a large Simpson's index. Thus, threshold-like tumor-suppressor inactivation can generate strong lineage-level clonal concentration even when the two possible orders of the inactivating events are selectively symmetric.

Figure~\ref{fig:random_order_nonadditive_clone_simpson_g1_sweep_50} provides finite-time simulation support for the prediction in equation~\eqref{eq:nonadditive_symmetric_simpson}. As \(g_1\) decreases from \(0.05\) to \(0.01\) while \(g_2\) remains fixed at \(0.10\), the sample mean of the total-population Simpson's index increases from \(0.438\) to \(0.765\), and the within-pathway index exhibits the same monotone pattern.

\begin{figure}[h]
    \centering
    \includegraphics[width=0.75\textwidth]{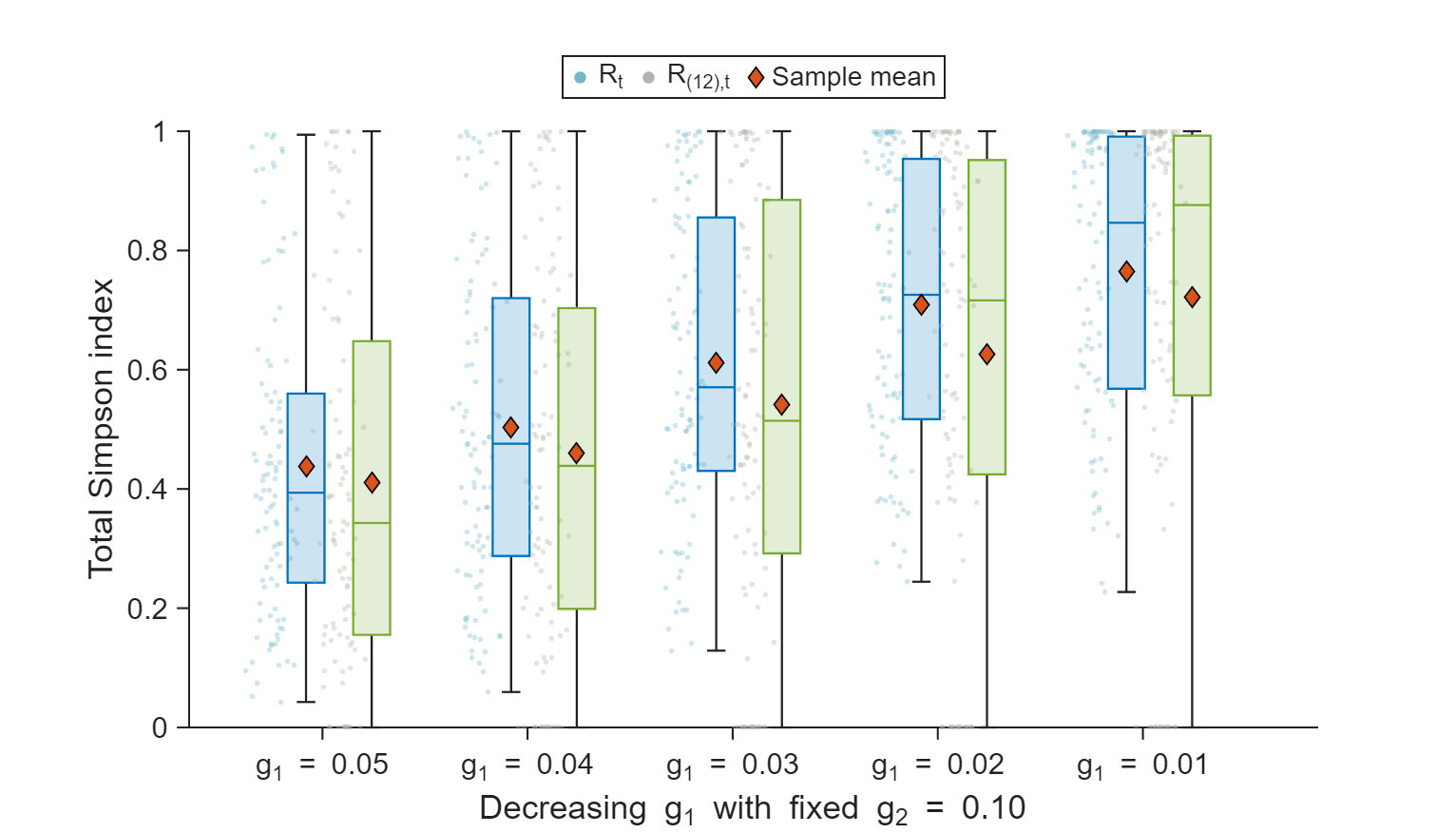}
    \caption{
    Finite-time distributions of the Simpson's index under commutative but non-additive fitness effects. The double-mutant fitness advantage is fixed at \(g_2=0.10\), whereas the single-mutant fitness advantage is varied over \(g_1\in\{0.05,0.04,0.03,0.02,0.01\}\). For each value of \(g_1\), the left boxplot shows the Simpson's index \(R_t\) of the total malignant population, and the right boxplot shows the within-pathway index \(R_{(12),t}\) for malignant clones arising through the \((1,2)\) pathway. Both indices are evaluated at the simulation endpoint. Each boxplot summarizes 100 independent simulation realizations. Diamonds denote sample means, jittered points show individual simulation realizations, and the remaining boxplot conventions are the same as those in Figure~\ref{fig:deterministic_order_clone_simpson}. As \(g_1\) decreases with \(g_2\) fixed, both indices shift toward larger values, indicating greater clonal concentration as the second-stage fitness gain \(g_2-g_1\) increases.}
\label{fig:random_order_nonadditive_clone_simpson_g1_sweep_50}
\end{figure}

\textbf{Extension: non-commutative fitness effects.}

We finally extend the model to allow \emph{non-commutative fitness effects}. In the commutative-fitness regime considered above, the net growth rate of double-mutant cells depends only on the two driver mutations and not on their order of acquisition. Under non-commutative fitness effects, the two mutation-order pathways instead produce double-mutant populations with different net growth rates, so that
\begin{align}\label{eqn:non-commutative fitness}
    \lambda_{s_{12}}\neq\lambda_{s_{21}}.
\end{align}
Condition \eqref{eqn:non-commutative fitness} requires mutational history to remain encoded in a persistent cellular state that is not represented by the two driver mutations alone. Biologically, such history dependence may arise when the first mutation induces or stabilizes an epigenetic or transcriptional program, alters the differentiation state, or produces another persistent cellular change that modifies the effect of the mutation acquired subsequently. In myeloproliferative neoplasms, for example, prior acquisition of a \textit{TET2} mutation was shown to alter the cell-intrinsic transcriptional and proliferative consequences of subsequent \textit{JAK2} V617F acquisition, providing biological support for persistent mutation-order effects (\citealt{ortmann2015effect}).

To isolate the consequence of order-dependent terminal fitness, suppose that $\lambda_{s_{12}}<\lambda_{s_{21}}$, while retaining the assumptions
\(\lambda_{s_1}<\lambda_{s_{12}}\) and \(\lambda_{s_2}<\lambda_{s_{21}}\). This inequality represents a setting in  which the cellular state created by acquiring mutation \(2\) first allows subsequent acquisition of mutation \(1\) to produce a larger terminal growth advantage than that generated through the reverse order. Because the two malignant types  now grow on different exponential scales, Theorem~\ref{thm:expected_simpson_additive}, which relies on equal terminal growth rates, no longer applies.

Specifically, the successive exponential approximation implies $\frac{Z_{12}(t)}{Z_{12}(t)+Z_{21}(t)}\rightarrow 0$. Thus, in the decomposition $R_t
=R_{(12),t}\pi_{12}(t)^2+R_{(21),t}\pi_{21}(t)^2$, the contribution of the lower-growth \((1,2)\) pathway vanishes. Applying Proposition~\ref{lem:deterministic_simpson} to the transition \(s_2\to s_{21}\) therefore gives, under this establishment conditioning,
\[
    \lim_{t\to\infty}\mathbb E[R_t]
    =
    1-\frac{\lambda_{s_2}}{\lambda_{s_{21}}}.
\]
Mutation order consequently affects heterogeneity through a qualitatively different mechanism from that operating in the commutative-fitness regime. When the terminal growth rates are equal, both mutation orders can remain represented on the same asymptotic scale, and the limiting Simpson's index balances pathway representation against clonal concentration within each pathway. When the terminal growth rates differ, by contrast, the pathway with the larger rate eventually accounts for essentially all malignant cells, and the limiting index is determined by the clone-founding and growth dynamics along that pathway alone.

\section{Concluding Comments}\label{Sec:conclusion}

In this paper, we developed a multitype branching-process framework to study how pathway-specific growth and mutation dynamics shape the clonal structure of a malignant population when driver mutations can be acquired in different orders. By distinguishing malignant cells according to both their mutation-order pathway and their independent founding event, the model separates heterogeneity across evolutionary pathways from heterogeneity among clones within each pathway. Our results show that clonal concentration depends not only on the selective advantages conferred by driver mutations, but also on when those advantages arise during progression, how rapidly intermediate populations generate malignant founders, and how strongly the resulting pathways are represented in the malignant population. In particular, a stronger driver need not always reduce clonal diversity, because selection acting at an intermediate stage can expand the population from which independent malignant clones arise.

The analysis has several limitations. We consider well-mixed populations with exponential growth, and characterize clonal structure primarily through the expected Simpson's index. The main results are also asymptotic and rely on the successive exponential approximation. These assumptions provide analytical tractability and isolate the mechanisms of interest, but they do not capture the spatial constraints, resource limitations, differentiation structure, or treatment-induced changes present in many tumors. Moreover, finite-time clonal structure may differ from its asymptotic behavior, particularly when competing malignant populations have similar growth rates.

Future work could extend the framework to mutation networks involving additional drivers, shared intermediate states, and multiple malignant genotypes. It would also be useful to establish finite-time results or corresponding limits directly for the original multitype branching process, and to study the full distribution of clonal-diversity measures rather than their expectations alone. Incorporating spatial growth, ecological interactions, and treatment would broaden the biological scope of the model. Finally, lineage-tracing data, single-cell sequencing, and tumor phylogenies could be used to estimate pathway-specific growth and mutation parameters and test the predicted relationship between evolutionary dynamics and clonal structure. More broadly, our analysis suggests that intratumor heterogeneity should be interpreted in relation to the evolutionary routes through which malignant clones are generated, rather than solely through the identities or fitness effects of the mutations they carry.

\newpage

\appendix

\section{Proof of Proposition~\ref{lem:deterministic_simpson}}
\label{app:proof_deterministic_simpson}

\begin{proof}
By the within-generation heterogeneity result of \citealt{durrett2011intratumor} (Equation~(11)), the limiting Simpson's index \(R_{\infty}\) satisfies $\mathbb E[R_{\infty}]=1-\lambda_{s_1}/\lambda_{s_{12}}$. By Lemma \ref{lem:one_step_clone_process} and Lemma \ref{lem:simpson_convergence}, $\lim_{t\to\infty}\mathbb E[R_t]=\mathbb E[R_{\infty}]$.
\end{proof}

\section{Proof of Theorem~\ref{thm:expected_simpson_additive}}
\label{app:thm:expected_simpson_additive}

Before proving Theorem~\ref{thm:expected_simpson_additive}, we establish three auxiliary lemmas. The first characterizes the clone-size point process generated by a single mutational transition from an exponentially growing source population. It also establishes that the random coefficient \(V_{s_j}\) in the asymptotic exponential representation of the \(s_j\) population is given by the total mass of the limiting clone-size point process.

Consider an admissible mutational transition \(s_i\to s_j\) satisfying $\lambda_{s_i}<\lambda_{s_j}$. Suppose that the \(s_i\) population is represented by $Z^*_{s_i}(u)=V_{s_i}e^{\lambda_{s_i}u}$ for $u\in\mathbb R$. Let \(U_k\) denote the time of the \(k\)-th \(s_i\to s_j\) mutation event, and let \(C_k(r)\) denote the size, at clone age \(r\), of the \(s_j\) clone founded by that event. Define the point process of rescaled \(s_j\)-clone sizes at time \(t\) by
\[
\mathcal P_{i,j}^{(t)}
=
\sum_{\substack{k:U_k\leq t\\ C_k(t-U_k)>0}}
\delta_{e^{-\lambda_{s_j}t}C_k(t-U_k)},
\]
where extinct clones are omitted and \(\delta_x\) denotes the unit point mass at \(x\). Let \(\mathcal M_p((0,\infty))\) denote the space of locally finite point measures on \((0,\infty)\), equipped with the vague topology. The notation \(\Rightarrow\) denotes convergence in distribution in this space. We then have the following result.

\begin{lemma}\label{lem:one_step_clone_process}
Conditional on \(V_{s_i}\), $\mathcal P_{i,j}^{(t)}\Rightarrow\mathcal P_{i,j,V_{s_i}}$ in the vague topology on \(\mathcal M_p((0,\infty))\), where \(\mathcal P_{i,j,V_{s_i}}\) is a Poisson point process with intensity measure 
\begin{equation}\label{eq:one_step_levy_measure}
    \nu_{i,j,V_{s_i}}(dx)=\frac{\mu_{i,j}V_{s_i} H_{i,j}\beta}{\Gamma(1-\beta)}x^{-\beta-1}\,dx,
\end{equation}
where $\beta=\frac{\lambda_{s_i}}{\lambda_{s_j}}$. The total mass of the limiting point process, $V_{s_j}:=\int_{(0,\infty)}x\,\mathcal P_{i,j,V_{s_i}}(dx)$,
is finite almost surely. Its conditional Laplace exponent is
\[
-\log
\mathbb E\!\left[
e^{-\theta V_{s_j}}
\,\middle|\,
V_{s_i}
\right]
=
\mu_{i,j}V_{s_i}H_{i,j}\theta^\beta,
\qquad \theta\geq0.
\]
\end{lemma}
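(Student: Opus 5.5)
Conditional on \(V_{s_i}\), the founding times \(U_k\) form a Poisson process on \(\mathbb R\) with intensity \(\mu_{i,j}V_{s_i}e^{\lambda_{s_i}u}\,du\). Each founder starts an independent birth--death clone \(C_k\) with rates \(a_{s_j},b_{s_j}\), independent of the founding times. By the marking theorem, \(\mathcal P^{(t)}_{i,j}\) is therefore itself a Poisson point process on \((0,\infty)\). Its intensity is
\[
\nu^{(t)}(B)=\int_{-\infty}^{t}\mu_{i,j}V_{s_i}e^{\lambda_{s_i}u}\,\mathbb P\bigl(e^{-\lambda_{s_j}t}C(t-u)\in B\bigr)\,du .
\]
Vague convergence of Poisson processes to a Poisson process follows from vague convergence of the intensity measures: it suffices that \(\nu^{(t)}(g)\to\nu(g)\) for continuous compactly supported \(g\) on \((0,\infty)\), since the Laplace functional \(\exp(-\int(1-e^{-g})d\nu^{(t)})\) then converges. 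So the proof reduces to computing \(\lim_t\nu^{(t)}\).

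Substitute the clone age \(r=t-u\). Then
\[
\nu^{(t)}(B)=\mu_{i,j}V_{s_i}e^{\lambda_{s_i}t}\int_0^\infty e^{-\lambda_{s_i}r}\,\mathbb P\bigl(e^{-\lambda_{s_j}r}C(r)\in e^{\lambda_{s_j}(t-r)}\cdots\bigr)dr.
\]
It is cleaner to use the exact law of a linear birth--death process from one cell. We have \(\mathbb P(C(r)>0)=\lambda_{s_j}/(a_{s_j}-b_{s_j}e^{-\lambda_{s_j}r})\), and conditional on survival \(C(r)\) is geometric with mean \(\approx(a_{s_j}/\lambda_{s_j})e^{\lambda_{s_j}r}\). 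Hence \(\mathbb P(C(r)>y)\approx(\lambda_{s_j}/a_{s_j})\exp(-y\lambda_{s_j}e^{-\lambda_{s_j}r}/a_{s_j})\). Compute the tail \(\bar\nu^{(t)}(x):=\nu^{(t)}((x,\infty))\) with \(y=xe^{\lambda_{s_j}t}\), and change variables to \(w=(\lambda_{s_j}/a_{s_j})xe^{\lambda_{s_j}(t-r)}\). The factor \(e^{\lambda_{s_i}t}\) cancels against \(e^{-\lambda_{s_i}r}=\) const\(\cdot w^{\beta}x^{-\beta}e^{-\lambda_{s_i}t}\). The limit is
\[
\bar\nu(x)=\mu_{i,j}V_{s_i}\frac{\lambda_{s_j}}{a_{s_j}}\Bigl(\frac{a_{s_j}}{\lambda_{s_j}}\Bigr)^{\beta}\frac{x^{-\beta}}{\lambda_{s_j}}\int_0^\infty w^{\beta-1}e^{-w}\,dw
=\frac{\mu_{i,j}V_{s_i}}{a_{s_j}}\Bigl(\frac{a_{s_j}}{\lambda_{s_j}}\Bigr)^{\beta}\Gamma(\beta)\,x^{-\beta}.
\]
Using \(\Gamma(\beta)\Gamma(1-\beta)=\pi/\sin(\pi\beta)\), this equals \(\mu_{i,j}V_{s_i}H_{i,j}x^{-\beta}/\Gamma(1-\beta)\). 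Differentiating in \(x\) gives \eqref{eq:one_step_levy_measure}.

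The main obstacle is justifying the limit rigorously rather than via the asymptotic geometric approximation. I would use the exact geometric law, whose parameters contain \(e^{-\lambda_{s_j}r}\) corrections, and apply dominated convergence in \(w\). The dangerous region is small \(r\), i.e.\ large \(w\), which corresponds to young clones. There the needed \(r\ge0\) truncation only removes \(w>(\lambda_{s_j}/a_{s_j})xe^{\lambda_{s_j}t}\to\infty\), and \(e^{-w}\) controls it. For \(x\) bounded away from \(0\) the integrand is dominated by \(Cw^{\beta-1}e^{-cw}\), which is integrable since \(\beta\in(0,1)\) by \(\lambda_{s_i}<\lambda_{s_j}\). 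Convergence of tails at every \(x>0\), with the limit continuous, gives vague convergence on \((0,\infty)\).

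Finally, \(V_{s_j}=\int x\,\mathcal P(dx)\) is finite a.s. because \(\int(1\wedge x)\,\nu(dx)<\infty\). Near \(0\), \(x\cdot x^{-\beta-1}\) is integrable since \(\beta<1\). Near \(\infty\), \(x^{-\beta-1}\) is integrable since \(\beta>0\). Campbell's formula then gives
\[
-\log\mathbb E[e^{-\theta V_{s_j}}\mid V_{s_i}]=\int_0^\infty(1-e^{-\theta x})\,\nu(dx).
\]
Integrating by parts, this equals \(\theta\int_0^\infty e^{-\theta x}\bar\nu(x)\,dx=\theta\cdot\frac{\mu_{i,j}V_{s_i}H_{i,j}}{\Gamma(1-\beta)}\Gamma(1-\beta)\theta^{\beta-1}=\mu_{i,j}V_{s_i}H_{i,j}\theta^\beta\), as claimed.
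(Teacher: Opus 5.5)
Your proof is correct, but your convergence step takes a genuinely different route from the paper's.

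\textbf{The paper's route.} It builds an explicit coupling. Each founder gets the martingale limit $\xi_k=\lim_{r\to\infty}e^{-\lambda_{s_j}r}C_k(r)$ of its own clone, and the limit process is defined as $\sum_{k:\xi_k>0}\delta_{e^{-\lambda_{s_j}U_k}\xi_k}$. Its intensity is read off from the exponential law of $\xi_k$, which gives the same $\Gamma(\beta)$ integral you obtain. The paper then shows $\int f\,d\mathcal P^{(t)}_{i,j}\to\int f\,d\mathcal P_{i,j,V_{s_i}}$ in probability. The finitely many clones founded before a cutoff $K$ are handled by almost-sure martingale convergence. The clones founded after $K$ are bounded using Markov's inequality and $\mathbb E[C(r)]=e^{\lambda_{s_j}r}$.

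\textbf{Your route.} You observe that $\mathcal P^{(t)}_{i,j}$ is already a Poisson random measure with intensity $\nu^{(t)}$. Its atoms on the lattice $e^{-\lambda_{s_j}t}\mathbb N$ are harmless. Convergence in law therefore reduces, through Laplace functionals, to vague convergence $\nu^{(t)}\to\nu_{i,j,V_{s_i}}$, which you verify by dominated convergence using the exact geometric law. The domination you assert does hold. Write $p=\lambda_{s_j}/(a_{s_j}e^{\lambda_{s_j}r}-b_{s_j})$ and $n=\lfloor xe^{\lambda_{s_j}t}\rfloor$. Then $(1-p)^n\le e^{-pn}$ and $pn\ge w(1-e^{-\lambda_{s_j}t}/x)$. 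So the integrand is bounded by a constant times $w^{\beta-1}e^{-w/2}$ once $xe^{\lambda_{s_j}t}\ge 2$.

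\textbf{Trade-offs.} Your argument is shorter, needs no truncation at a cutoff $K$, and produces an explicit finite-$t$ intensity. However, it relies on the closed-form transition law of the linear birth--death process and gives only convergence in distribution. The paper's coupling gives convergence in probability, with limiting atoms that are the actual almost-sure limits of the rescaled clones. That is what supports reading $V_{s_j}$ as the amplitude of the $s_j$ population in the successive exponential approximation. It also uses only first moments and the martingale-limit law, so it carries over to branching mechanisms without explicit transition probabilities.

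Your integration-by-parts computation of the Laplace exponent is equivalent to the paper's direct evaluation of $\int_0^\infty(1-e^{-\theta x})x^{-\beta-1}\,dx$.
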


\subsection{Proof of Lemma \ref{lem:one_step_clone_process}}

\begin{proof}
Because the type $s_i$ population is $Z^*_{s_i}(u)=V_{s_i}e^{\lambda_{s_i}u}$, the times \(\{U_k\}_{k\geq 1}\) of the \(s_i\to s_j\) mutation events form an inhomogeneous Poisson point process on \(\mathbb R\) with intensity $\mu_{i,j}V_{s_i} e^{\lambda_{s_i}u}\,du$. For each \(k\), let
\[
\xi_k
=
\lim_{r\to\infty}
e^{-\lambda_{s_j}r}C_k(r)
\]
denote the martingale limit of the \(s_j\) clone founded at time \(U_k\). The random variables \(\{\xi_k\}_{k\geq 1}\) are independent copies of \(\xi_{s_j}\), where $\mathbb P(\xi_{s_j}=0)=\frac{b_{s_j}}{a_{s_j}}$, and, conditional on \(\xi_{s_j}>0\), $\xi_{s_j}\sim\operatorname{Exp}\left(\frac{\lambda_{s_j}}{a_{s_j}}\right)$. For each fixed mutation time \(U_k\), $e^{-\lambda_{s_j}t}C_k(t-U_k)$ converges to $e^{-\lambda_{s_j}U_k}\xi_k$ a.s. as $t\to\infty$. \sout{Thus, the limiting point process can be represented as} Therefore, we define a limiting point process as follows
\[
\mathcal P_{i,j,V_{s_i}}
:=
\sum_{k:\xi_k>0}
\delta_{e^{-\lambda_{s_j}U_k}\xi_k}.
\]
By the Poisson marking theorem, $\sum_k\delta_{(U_k,\xi_k)}$ is a Poisson point process on \(\mathbb R\times(0,\infty)\) with intensity measure $\mu_{i,j}V_{s_i} e^{\lambda_{s_i}u}\,du\,\mathbb P(\xi_{s_j}\in dz)$\footnote{Abusing notation, $\xi_{s_j}\sim\operatorname{Exp}\left(\frac{\lambda_{s_j}}{a_{s_j}}\right)$.}. Applying the Poisson mapping theorem to the map $(u,z)\rightarrow e^{-\lambda_{s_j}u}z$ shows that \(\mathcal P_{i,j,V_{s_i}}\) is a Poisson point process on \((0,\infty)\). 

For \(x>0\), we have
\begin{align*}
\nu_{i,j,V_{s_i}}\bigl((x,\infty)\bigr)
&=
\mu_{i,j}V_{s_i}
\int_{-\infty}^{\infty}
e^{\lambda_{s_i}u}
\mathbb P\left(
e^{-\lambda_{s_j}u}\xi_{s_j}>x
\right)du
\\
&=
\mu_{i,j}V_{s_i}
\int_{-\infty}^{\infty}
e^{\lambda_{s_i}u}
\mathbb P\left(
\xi_{s_j}>xe^{\lambda_{s_j}u}
\right)du
\\
&=
\mu_{i,j}V_{s_i}
\frac{\lambda_{s_j}}{a_{s_j}}
\int_{-\infty}^{\infty}
e^{\lambda_{s_i}u}
\exp\left\{
-\frac{\lambda_{s_j}}{a_{s_j}}
xe^{\lambda_{s_j}u}
\right\}du.
\end{align*}
Using the change of variables $z=\frac{\lambda_{s_j}}{a_{s_j}}xe^{\lambda_{s_j}u}$, we obtain
\begin{align*}
\nu_{i,j,V_{s_i}}\bigl((x,\infty)\bigr)
&=
\frac{\mu_{i,j}V_{s_i}}{a_{s_j}}
\left(
\frac{a_{s_j}}{\lambda_{s_j}}
\right)^\beta
x^{-\beta}
\int_0^\infty
z^{\beta-1}e^{-z}\,dz
\\
&=
\frac{\mu_{i,j}V_{s_i}}{a_{s_j}}
\left(
\frac{a_{s_j}}{\lambda_{s_j}}
\right)^\beta
\Gamma(\beta)x^{-\beta}.
\end{align*}
By the definition of \(H_{i,j}\) in
\eqref{eq:H_constant} and Euler's reflection identity, $\Gamma(\beta)\Gamma(1-\beta)=\frac{\pi}{\sin(\pi\beta)}$, we have $\frac{1}{a_{s_j}}\left(\frac{a_{s_j}}{\lambda_{s_j}}\right)^\beta\Gamma(\beta)=\frac{H_{i,j}}{\Gamma(1-\beta)}$. Therefore, $\nu_{i,j,V_{s_i}}\bigl((x,\infty)\bigr)=\frac{\mu_{i,j}V_{s_i} H_{i,j}}{\Gamma(1-\beta)}x^{-\beta}$. Differentiating the tail measure gives
\[
\nu_{i,j,V_{s_i}}(dx)
=
\frac{
\mu_{i,j}V_{s_i} H_{i,j}\beta
}{
\Gamma(1-\beta)
}
x^{-\beta-1}\,dx,
\]
which proves \eqref{eq:one_step_levy_measure}.

We next prove the vague convergence of
\(\mathcal P_{i,j}^{(t)}\). Let \(C_c((0,\infty))\) denote the space of real-valued continuous functions on \((0,\infty)\) with compact support, and let \(f\in C_c((0,\infty))\). Since the support of \(f\) is bounded away from zero, there exists \(\varepsilon>0\) such that $f(x)=0$ for $0<x<\varepsilon$. We extend \(f\) to zero by setting \(f(0)=0\). Fix \(K\in\mathbb R\). The number of mutation events satisfying \(U_k\leq K\) is Poisson distributed with finite mean $\mu_{i,j}V_{s_i}\int_{-\infty}^{K}e^{\lambda_{s_i}u}\,du=\frac{\mu_{i,j}V_{s_i}}{\lambda_{s_i}}e^{\lambda_{s_i}K}$. Hence, almost surely, only finitely many mutation events occur before time \(K\). For these finitely many clones, the clone-wise martingale convergence implies $\sum_{k:U_k\leq K}f\left(e^{-\lambda_{s_j}t}C_k(t-U_k)
\right)\rightarrow\sum_{k:U_k\leq K}f\left(e^{-\lambda_{s_j}U_k}\xi_k\right)$ almost surely as \(t\to\infty\).

It remains to control clones founded after time \(K\). By the Poisson mutation intensity,
\begin{align*}
\mathbb E\left[
\sum_{k:K<U_k\leq t}
\mathbf 1
\left\{
e^{-\lambda_{s_j}t}
C_k(t-U_k)
\geq\varepsilon
\right\}
\,\middle|\,
V_{s_i}
\right]
=
\mu_{i,j}V_{s_i}
\int_K^t
e^{\lambda_{s_i}u}
\mathbb P\left(
e^{-\lambda_{s_j}t}
C_1(t-u)
\geq\varepsilon
\right)du.
\end{align*}
Since $\mathbb E[C_1(r)]=e^{\lambda_{s_j}r}$, Markov's inequality yields $\mathbb P\left(e^{-\lambda_{s_j}t}C_1(t-u)\geq\varepsilon\right)\leq\frac{e^{-\lambda_{s_j}u}}{\varepsilon}$. It follows that
\begin{align*}
\mathbb E\left[
\sum_{k:K<U_k\leq t}
\mathbf 1
\left\{
e^{-\lambda_{s_j}t}
C_k(t-U_k)
\geq\varepsilon
\right\}
\,\middle|\,
V_{s_i}
\right]
&\leq
\frac{\mu_{i,j}V_{s_i}}{\varepsilon}
\int_K^t
e^{-(\lambda_{s_j}-\lambda_{s_i})u}\,du
\\
&\leq
\frac{\mu_{i,j}V_{s_i}}
{\varepsilon(\lambda_{s_j}-\lambda_{s_i})}
e^{-(\lambda_{s_j}-\lambda_{s_i})K}.
\end{align*}
Similarly, using \(\mathbb E[\xi_{s_j}]=1\),
\begin{align*}
\mathbb E\left[
\sum_{k:U_k>K}
\mathbf 1
\left\{
e^{-\lambda_{s_j}U_k}\xi_k
\geq\varepsilon
\right\}
\,\middle|\,
V_{s_i}
\right]
&=
\mu_{i,j}V_{s_i}
\int_K^\infty
e^{\lambda_{s_i}u}
\mathbb P\left(
e^{-\lambda_{s_j}u}
\xi_{s_j}
\geq\varepsilon
\right)du
\\
&\leq
\frac{\mu_{i,j}V_{s_i}}
{\varepsilon(\lambda_{s_j}-\lambda_{s_i})}
e^{-(\lambda_{s_j}-\lambda_{s_i})K}.
\end{align*}
Both bounds converge to zero as \(K\to\infty\), uniformly in \(t\geq K\). Combining these concentration results with the almost-sure convergence of the finitely many clones founded before \(K\), we obtain
\[
\int_{(0,\infty)}
f(x)\,\mathcal P_{i,j}^{(t)}(dx)
\rightarrow
\int_{(0,\infty)}
f(x)\,\mathcal P_{i,j,V_{s_i}}(dx)
\]
in probability conditional on \(V_{s_i}\). Applying this to a countable convergence-determining family in \(f\in C_c((0,\infty))\) shows that $\mathcal P_{i,j}^{(t)} \rightarrow P_{i,j,V_{s_i}}$ in probability in the vague topology, conditional on \(V_{s_i}\), and hence also in distribution.

Finally, let $V_{s_j}=\int_{(0,\infty)}x\,\mathcal P_{i,j,V_{s_i}}(dx)$. Since \(0<\beta<1\), $\int_0^\infty(1\wedge x)\nu_{i,j,V_{s_i}}(dx)<\infty$, and hence \(V_{s_j}<\infty\) almost surely. By the Laplace functional of a Poisson point process,
\[
\mathbb E\left[
e^{-\theta V_{s_j}}
\,\middle|\,
V_{s_i}
\right]
=
\exp\left\{
-
\int_0^\infty
\left(1-e^{-\theta x}\right)
\nu_{i,j,V_{s_i}}(dx)
\right\}.
\]
Therefore,
\begin{align*}
-\log
\mathbb E\left[
e^{-\theta V_{s_j}}
\,\middle|\,
V_{s_i}
\right]
&=
\frac{
\mu_{i,j}V_{s_i} H_{i,j}\beta
}{
\Gamma(1-\beta)
}
\int_0^\infty
\left(1-e^{-\theta x}\right)
x^{-\beta-1}\,dx.
\end{align*} 
For \(0<\beta<1\), $\int_0^\infty\left(1-e^{-\theta x}\right)x^{-\beta-1}\,dx=\frac{\Gamma(1-\beta)}{\beta}\theta^\beta$. Consequently,
\[
-\log
\mathbb E\left[
e^{-\theta V_{s_j}}
\,\middle|\,
V_{s_i}
\right]
=
\mu_{i,j}V_{s_i} H_{i,j}\theta^\beta,
\]
as desired.
\end{proof}

To obtain the Simpson's index of the total malignant population, we combine the limiting clone-size point processes generated along the \((1,2)\) and \((2,1)\) pathways. Because the two malignant cell types have the same net growth rate \(\lambda\), both pathway-specific populations contribute on the same asymptotic scale. Accordingly, the finite-time decomposition $R_t=R_{(12),t}\pi_{12}(t)^2+R_{(21),t}\pi_{21}(t)^2$ carries over to the point-process limit. The limiting Simpson's index therefore depends jointly on the clonal diversity within each pathway and on the random fraction of the malignant population contributed by that pathway. These quantities are generally not independent, so their expectations cannot be evaluated separately. The following lemma performs the required joint calculation and expresses the limiting expected Simpson's index in terms of the pathway-scale constants \(A_{12}\) and \(A_{21}\).

Specifically, let \(\mathcal P_{1,12,V_{s_1}}\) and \(\mathcal P_{2,21,V_{s_2}}\) denote the limiting clone-size point processes corresponding to the two mutation orders \(s_{\emptyset}\to s_1\to s_{12}\) and \(s_{\emptyset}\to s_2\to s_{21}\). For simplicity of notation, we use \(\mathcal P_{12}\) and \(\mathcal P_{21}\) instead. The points of \(\mathcal P_{12}\) represent the limiting scaled sizes of individual \(s_{12}\)-clones, and the points of \(\mathcal P_{21}\) represent the limiting scaled sizes of individual \(s_{21}\)-clones. Let $S_{12}=\sum_{x\in\mathcal P_{12}}x$, $S_{21}=\sum_{y\in\mathcal P_{21}}y$, $S_{12}^{(2)}=\sum_{x\in\mathcal P_{12}}x^2$, and $S_{21}^{(2)}=\sum_{y\in\mathcal P_{21}}y^2$. We define the limiting Simpson's index of the total malignant population as
\[
R=
\frac{S_{12}^{(2)}+S_{21}^{(2)}}{(S_{12}+S_{21})^2}.
\]
The total mass \(S_{12}+S_{21}\) is positive almost surely on the non-extinction event considered under the successive exponential approximation.

\begin{lemma}
\label{lem:limiting_expected_simpson}
Let \(R\) be the limiting Simpson's index of the total malignant population
under the point-process limit. Then
\begin{equation}
\label{eq:limiting_expected_R}
\mathbb E[R]
=
\frac{
A_{12}\left(1-\frac{\lambda_{s_1}}{\lambda}\right)
+
A_{21}\left(1-\frac{\lambda_{s_2}}{\lambda}\right)
}{
A_{12}+A_{21}
}.
\end{equation}
\end{lemma}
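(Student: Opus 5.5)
We would compute $\mathbb E[R]$ directly from the Poisson structure. The plan has three steps: linearize the squared denominator with a Laplace-type integral, use the Mecke (Palm) formula for each pathway's point process conditional on its source amplitude, and then integrate out the amplitudes recursively using Lemma~\ref{lem:one_step_clone_process}.

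\textbf{Step 1: linearize the denominator.} For $s>0$ we have $s^{-2}=\int_0^\infty \theta e^{-\theta s}\,d\theta$. Hence, by Tonelli (everything is nonnegative),
\[
\mathbb E[R]=\int_0^\infty \theta\,\mathbb E\!\left[\bigl(S_{12}^{(2)}+S_{21}^{(2)}\bigr)e^{-\theta(S_{12}+S_{21})}\right]d\theta .
\]
It therefore suffices to treat the $S_{12}^{(2)}$ term; the $S_{21}^{(2)}$ term follows by symmetry.

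\textbf{Step 2: Mecke formula within one pathway.} The structure is as follows.
\begin{itemize}
\item Given $V_0$, the amplitudes $V_{s_1}$ and $V_{s_2}$ are conditionally independent, since they come from independent Poisson mutation streams off the same wild-type trajectory.
\item Given $(V_{s_1},V_{s_2})$, the processes $\mathcal P_{12}$ and $\mathcal P_{21}$ are independent Poisson processes, with intensities given by Lemma~\ref{lem:one_step_clone_process}.
\end{itemize}
Write $\beta_1=\lambda_{s_1}/\lambda$. The Mecke formula gives
\[
\mathbb E\!\left[S_{12}^{(2)}e^{-\theta S_{12}}\,\middle|\,V_{s_1}\right]=\int_0^\infty x^2e^{-\theta x}\,\nu_{1,12,V_{s_1}}(dx)\;\mathbb E\!\left[e^{-\theta S_{12}}\,\middle|\,V_{s_1}\right].
\]
The integral evaluates to $\mu_{1,12}V_{s_1}H_{1,12}\beta_1(1-\beta_1)\theta^{\beta_1-2}$, using $\Gamma(2-\beta_1)=(1-\beta_1)\Gamma(1-\beta_1)$. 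The Laplace transform is $\exp\{-\mu_{1,12}H_{1,12}\theta^{\beta_1}V_{s_1}\}$. Multiplying by $\mathbb E[e^{-\theta S_{21}}\mid V_{s_2}]=\exp\{-\mu_{2,21}H_{2,21}\theta^{\beta_2}V_{s_2}\}$, the inner expectation reduces to
\[
\mu_{1,12}H_{1,12}\beta_1(1-\beta_1)\theta^{\beta_1-2}\,\mathbb E\!\left[V_{s_1}e^{-\varphi V_{s_1}-\psi V_{s_2}}\right],
\]
with $\varphi=\mu_{1,12}H_{1,12}\theta^{\beta_1}$ and $\psi=\mu_{2,21}H_{2,21}\theta^{\beta_2}$.

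\textbf{Step 3: integrate out the intermediate amplitudes.} Apply Lemma~\ref{lem:one_step_clone_process} to $s_\emptyset\to s_1$ and $s_\emptyset\to s_2$, with $\beta_0=\lambda_{s_\emptyset}/\lambda_{s_1}$ and $\beta_0'=\lambda_{s_\emptyset}/\lambda_{s_2}$. Differentiating the conditional Laplace exponent in $\varphi$ and using conditional independence given $V_0$ gives
\[
\mathbb E\!\left[V_{s_1}e^{-\varphi V_{s_1}-\psi V_{s_2}}\,\middle|\,V_0\right]=\mu_{\emptyset,1}H_{\emptyset,1}\beta_0\varphi^{\beta_0-1}V_0\exp\!\left\{-V_0\left(\mu_{\emptyset,1}H_{\emptyset,1}\varphi^{\beta_0}+\mu_{\emptyset,2}H_{\emptyset,2}\psi^{\beta_0'}\right)\right\}.
\]
Here the key cancellation appears. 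Since $\beta_0\beta_1=\beta_0'\beta_2=\gamma:=\lambda_{s_\emptyset}/\lambda$, the exponent equals $V_0(A_{12}+A_{21})\theta^\gamma$, by the definition \eqref{eq:A_constants}. With $V_0\sim\operatorname{Exp}(r)$ and $r=\lambda_{s_\emptyset}/a_{s_\emptyset}$, we have $\mathbb E[V_0e^{-cV_0}]=r/(r+c)^2$.

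\textbf{Assembling.} Collecting powers of $\theta$ gives $\theta\cdot\theta^{\beta_1-2}\cdot\theta^{\beta_1(\beta_0-1)}=\theta^{\gamma-1}$. The constants combine into $A_{12}$. The $S_{12}^{(2)}$ contribution is therefore
\[
(1-\beta_1)A_{12}\int_0^\infty \gamma\theta^{\gamma-1}\frac{r}{\bigl(r+(A_{12}+A_{21})\theta^\gamma\bigr)^2}\,d\theta=(1-\beta_1)\frac{A_{12}}{A_{12}+A_{21}},
\]
via the substitution $u=\theta^\gamma$. Adding the symmetric $(2,1)$ term yields \eqref{eq:limiting_expected_R}, and also the separate contributions \eqref{eq:pathway_12_simpson_contribution}--\eqref{eq:pathway_21_simpson_contribution}.

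\textbf{Main obstacle.} The algebra is routine once the exponents are seen to collapse to the common power $\theta^\gamma$. The delicate part is the justification of the steps. First, we must justify the Mecke formula and the differentiation of the Laplace transform under the expectation; nonnegativity handles interchange via Tonelli, and monotone convergence handles differentiation of $\mathbb E[e^{-\varphi V_{s_1}}\mid V_0]$. Second, we must check that the joint law of $(V_{s_1},V_{s_2})$ given $V_0$ really factorizes as claimed under the successive exponential approximation. I would verify this first, by writing both amplitudes as functionals of the marked Poisson processes built in Lemma~\ref{lem:one_step_clone_process} off the single trajectory $V_0e^{\lambda_{s_\emptyset}t}$.
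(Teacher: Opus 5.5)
Your proposal is correct and follows essentially the same route as the paper. Both arguments use the identity $z^{-2}=\int_0^\infty r e^{-rz}\,dr$, apply the Campbell--Mecke formula to each pathway conditional on $(V_{s_1},V_{s_2})$, and differentiate the first-stage Laplace transforms given $V_0$, so that the exponents collapse to $(A_{12}+A_{21})V_0\theta^{\lambda_{s_\emptyset}/\lambda}$. The only difference is the order of integration: you average over $V_0$ using its exponential law before the $\theta$-integral, whereas the paper does the $\theta$-integral conditional on $V_0$ and finds the result is free of $V_0$, which shows the answer does not depend on the law of $V_0$ at all.
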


\subsection{Proof of Lemma \ref{lem:limiting_expected_simpson}}

\begin{proof}

For compactness in the proof only, write \(D_{12}=\mu_{1,12}H_{1,12}\), \(D_{21}=\mu_{2,21}H_{2,21}\),
\(\alpha_{12}=\lambda_{s_1}/\lambda\), and \(\alpha_{21}=\lambda_{s_2}/\lambda\). Since \(\lambda_{s_1},\lambda_{s_2}<\lambda\), we have \(0<\alpha_{12},\alpha_{21}<1\). Conditional on \(V_{s_1}\) and \(V_{s_2}\), the point processes \(\mathcal P_{12}\) and \(\mathcal P_{21}\) are independent Poisson point processes. By Lemma \ref{lem:one_step_clone_process}, their total masses have Laplace exponents $\Psi_{12}(\theta)=D_{12}V_{s_1}\theta^{\alpha_{12}}$ and $\Psi_{21}(\theta)=D_{21}V_{s_2}\theta^{\alpha_{21}}$. 

We first compute the conditional expectation \(\mathbb E[R\mid V_{s_1},V_{s_2}]\). For \(z>0\), 
\begin{equation}\label{eqn:transform_simpson}
    \frac{1}{z^2}=\int_0^\infty r e^{-rz}\,dr.
\end{equation}
Since all terms in the following calculation are nonnegative, Tonelli's theorem allows us to interchange expectation and integration. Thus, by \eqref{eqn:transform_simpson},
\[
\mathbb E[R\mid V_{s_1},V_{s_2}]
=
\int_0^\infty
r\,
\mathbb E\left[
\left(S_{12}^{(2)}+S_{21}^{(2)}\right)
e^{-r(S_{12}+S_{21})}
\mid V_{s_1},V_{s_2}
\right]dr .
\]
We now evaluate the expectation inside the integral. We use the Campbell--Mecke formula for Poisson point processes (\citealt{last2017lectures}) (Theorem~4.1). If \(\mathcal P\) is a Poisson point process on \((0,\infty)\) with mean measure \(\nu\), then for any nonnegative measurable function \(F(x,\mathcal P)\),
\[
\mathbb E\left[
\sum_{x\in\mathcal P}F(x,\mathcal P)
\right]
=
\int_0^\infty
\mathbb E\left[
F(x,\mathcal P+\delta_x)
\right]\nu(dx).
\]
Let \(\mathcal P\) be a Poisson point process with total mass \(S=\sum_{x\in\mathcal P}x\), mean measure \(\nu\), and Laplace exponent $\Psi(r)=\int_0^\infty(1-e^{-rx})\nu(dx)$. Applying the Campbell--Mecke formula with \(F(x,\mathcal P)=x^2\exp\{-r\sum_{z\in\mathcal P}z\}\), we obtain $\mathbb E\left[\sum_{x\in\mathcal P}x^2e^{-rS}\right]=\int_0^\infty x^2e^{-rx}\mathbb E[e^{-rS}]\nu(dx)$. Since the Laplace transform of \(S\) is \(\mathbb E[e^{-rS}]=e^{-\Psi(r)}\), this becomes $\mathbb E\left[\sum_{x\in\mathcal P}x^2e^{-rS}\right]=e^{-\Psi(r)}\int_0^\infty x^2e^{-rx}\nu(dx)$. Moreover, we have $\Psi'(r)=\int_0^\infty xe^{-rx}\nu(dx)$ and $\Psi''(r)=-\int_0^\infty x^2e^{-rx}\nu(dx)$. Therefore
\begin{equation}\label{eqn:identity}
    \mathbb E\left[
\sum_{x\in\mathcal P}x^2e^{-rS}
\right]
=
e^{-\Psi(r)}[-\Psi''(r)].
\end{equation}

We then apply \eqref{eqn:identity} separately to \(\mathcal P_{12}\) and \(\mathcal P_{21}\). Conditional on \(V_{s_1}\) and \(V_{s_2}\), the two point processes are independent. Hence
\begin{align*}
    \mathbb E\left[
S_{12}^{(2)}e^{-r(S_{12}+S_{21})}
\mid V_{s_1},V_{s_2}
\right] & =\mathbb E\left[
S_{12}^{(2)}e^{-rS_{12}}
\mid V_{s_1}
\right]
\mathbb E\left[
e^{-rS_{21}}
\mid V_{s_2}
\right]\\
& =e^{-\Psi_{12}(r)}[-\Psi_{12}''(r)]e^{-\Psi_{21}(r)}.
\end{align*}
Similarly,
\[
\mathbb E\left[
S_{21}^{(2)}e^{-r(S_{12}+S_{21})}
\mid V_{s_1},V_{s_2}
\right]
=
e^{-\Psi_{12}(r)}e^{-\Psi_{21}(r)}[-\Psi_{21}''(r)].
\]
Adding the two terms gives
\[
\begin{aligned}
&\mathbb E\left[
\left(S_{12}^{(2)}+S_{21}^{(2)}\right)
e^{-r(S_{12}+S_{21})}
\mid V_{s_1},V_{s_2}
\right]  =
e^{-\Psi_{12}(r)-\Psi_{21}(r)}
\left[-\Psi_{12}''(r)-\Psi_{21}''(r)\right].
\end{aligned}
\]
Since \(\Psi_{12}(r)=D_{12}V_{s_1}r^{\alpha_{12}}\), we have $-\Psi_{12}''(r)=D_{12}V_{s_1}\alpha_{12}(1-\alpha_{12})r^{\alpha_{12}-2}$. Similarly, $-\Psi_{21}''(r)=D_{21}V_{s_2}\alpha_{21}(1-\alpha_{21})r^{\alpha_{21}-2}$. Multiplying by the extra factor \(r\) from
\(z^{-2}=\int_0^\infty r e^{-rz}dr\), we get
\begin{equation}
\label{eq:conditional_expected_R}
\begin{aligned}
\mathbb E[R\mid V_{s_1},V_{s_2}]
=
\int_0^\infty
&
e^{-D_{12}V_{s_1}r^{\alpha_{12}}
  -D_{21}V_{s_2}r^{\alpha_{21}}}
\\
&\times
\left[
D_{12}V_{s_1}\alpha_{12}(1-\alpha_{12})r^{\alpha_{12}-1}
+
D_{21}V_{s_2}\alpha_{21}(1-\alpha_{21})r^{\alpha_{21}-1}
\right]dr .
\end{aligned}
\end{equation}

We now average over \(V_{s_1}\) and \(V_{s_2}\), conditional on \(V_0\). By Lemma~\ref{lem:one_step_clone_process} applied to \(s_{\emptyset}\to s_1\) and \(s_{\emptyset}\to s_2\), the first-stage intermediate limits satisfy $\mathbb E[e^{-\eta V_{s_1}}\mid V_0]=\exp\left\{-\mu_{\emptyset,1}H_{\emptyset,1}V_0\eta^{\lambda_{s_{\emptyset}}/\lambda_{s_1}}\right\}$ and $\mathbb E[e^{-\eta V_{s_2}}\mid V_0]=\exp\left\{-\mu_{\emptyset,2}H_{\emptyset,2}V_0\eta^{\lambda_{s_{\emptyset}}/\lambda_{s_2}}\right\}$.
The two random variables \(V_{s_1}\) and \(V_{s_2}\) are independent conditional on \(V_0\), because they are generated by independent mutation processes from the deterministic wild-type population.

We compute the contribution from the term $e^{-D_{12}V_{s_1}r^{\alpha_{12}}-D_{21}V_{s_2}r^{\alpha_{21}}} \times D_{12}V_{s_1}\alpha_{12}(1-\alpha_{12})r^{\alpha_{12}-1}$ in detail. This term can be decomposed into the product of 
\begin{align}
& e^{-D_{12}V_{s_1}r^{\alpha_{12}}} \times D_{12}V_{s_1}\alpha_{12}(1-\alpha_{12})r^{\alpha_{12}-1}, \text{ and} \label{eqn: first_term}\\
& e^{-D_{21}V_{s_2}r^{\alpha_{21}}}. \label{eqn: second_term}
\end{align}
Differentiating the conditional Laplace transform of \(V_{s_1}\) gives
\[
\mathbb E\left[V_{s_1}e^{-\eta V_{s_1}}\mid V_0\right]
=
\mu_{\emptyset,1}H_{\emptyset,1}V_0
\frac{\lambda_{s_{\emptyset}}}{\lambda_{s_1}}
\eta^{\lambda_{s_{\emptyset}}/\lambda_{s_1}-1}
\exp\left\{
-\mu_{\emptyset,1}H_{\emptyset,1}V_0
\eta^{\lambda_{s_{\emptyset}}/\lambda_{s_1}}
\right\}.
\]
Recalling that \(A_{12}=\mu_{\emptyset,1}H_{\emptyset,1} D_{12}^{\lambda_{s_{\emptyset}}/\lambda_{s_1}}\), then we have
\[
\mathbb E\left[
D_{12}V_{s_1}e^{-D_{12}V_{s_1}r^{\alpha_{12}}}
\mid V_0
\right]
=
A_{12}V_0
\frac{\lambda_{s_{\emptyset}}}{\lambda_{s_1}}
r^{\lambda_{s_{\emptyset}}/\lambda-\lambda_{s_1}/\lambda}
e^{-A_{12}V_0r^{\lambda_{s_{\emptyset}}/\lambda}} .
\]
Multiplying by the remaining factor \(\alpha_{12}(1-\alpha_{12})r^{\alpha_{12}-1}\) from \eqref{eq:conditional_expected_R}, and using \((\lambda_{s_{\emptyset}}/\lambda_{s_1})\alpha_{12} =\lambda_{s_{\emptyset}}/\lambda\), the \eqref{eqn: first_term} contribution becomes $A_{12}V_0
\frac{\lambda_{s_{\emptyset}}}{\lambda}\left(1-\frac{\lambda_{s_1}}{\lambda}\right)r^{\lambda_{s_{\emptyset}}/\lambda-1}e^{-A_{12}V_0r^{\lambda_{s_{\emptyset}}/\lambda}}$. Additionally, \eqref{eqn: second_term} contributes the factor $\mathbb E\left[e^{-D_{21}V_{s_2}r^{\alpha_{21}}}\mid V_0\right]=e^{-A_{21}V_0r^{\lambda_{s_{\emptyset}}/\lambda}}$. Thus the total contribution is
\[
A_{12}V_0
\frac{\lambda_{s_{\emptyset}}}{\lambda}
\left(1-\frac{\lambda_{s_1}}{\lambda}\right)
r^{\lambda_{s_{\emptyset}}/\lambda-1}
e^{-(A_{12}+A_{21})V_0r^{\lambda_{s_{\emptyset}}/\lambda}} .
\]
The contribution from the other term in \eqref{eq:conditional_expected_R} is obtained in the same way:
\[
A_{21}V_0
\frac{\lambda_{s_{\emptyset}}}{\lambda}
\left(1-\frac{\lambda_{s_2}}{\lambda}\right)
r^{\lambda_{s_{\emptyset}}/\lambda-1}
e^{-(A_{12}+A_{21})V_0r^{\lambda_{s_{\emptyset}}/\lambda}} .
\]
Substituting these two contributions into
\eqref{eq:conditional_expected_R} gives
\begin{equation}
\label{eq:conditional_on_V0}
\begin{aligned}
\mathbb E[R\mid V_0]
=
\int_0^\infty
&
V_0\frac{\lambda_{s_{\emptyset}}}{\lambda}
r^{\lambda_{s_{\emptyset}}/\lambda-1}
e^{-(A_{12}+A_{21})V_0r^{\lambda_{s_{\emptyset}}/\lambda}}
\\
&\times
\left[
A_{12}\left(1-\frac{\lambda_{s_1}}{\lambda}\right)
+
A_{21}\left(1-\frac{\lambda_{s_2}}{\lambda}\right)
\right]dr .
\end{aligned}
\end{equation}

It remains only to evaluate the integral in \eqref{eq:conditional_on_V0}. Let \(w=(A_{12}+A_{21})V_0r^{\lambda_{s_{\emptyset}}/\lambda}\). Then $dw=(A_{12}+A_{21})V_0
\frac{\lambda_{s_{\emptyset}}}{\lambda}r^{\lambda_{s_{\emptyset}}/\lambda-1}dr$. Therefore
\[
\int_0^\infty
V_0\frac{\lambda_{s_{\emptyset}}}{\lambda}
r^{\lambda_{s_{\emptyset}}/\lambda-1}
e^{-(A_{12}+A_{21})V_0r^{\lambda_{s_{\emptyset}}/\lambda}}
dr
=
\frac{1}{A_{12}+A_{21}}.
\]
Thus
\[
\mathbb E[R\mid V_0]
=
\frac{
A_{12}\left(1-\frac{\lambda_{s_1}}{\lambda}\right)
+
A_{21}\left(1-\frac{\lambda_{s_2}}{\lambda}\right)
}{
A_{12}+A_{21}
}.
\]
The right-hand side no longer depends on \(V_0\). Taking expectation again gives \eqref{eq:limiting_expected_R}.
\end{proof}

Lemma~\ref{lem:limiting_expected_simpson} identifies the expected Simpson's index associated with the limiting clone-size point process. To apply this result to the finite-time population under the successive exponential approximation, it remains to connect the Simpson's index \(R_t\) of the malignant population at time \(t\) to its limiting point-process counterpart \(R\).

\begin{lemma}\label{lem:simpson_convergence}
Under the assumptions of Theorem~\ref{thm:expected_simpson_additive},
\(\lim_{t\to\infty}\mathbb E[R_t]=\mathbb E[R]\), where \(R\) is the limiting
Simpson's index in Lemma~\ref{lem:limiting_expected_simpson}.
\end{lemma}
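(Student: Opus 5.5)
We will show that the rescaled finite-time index converges in distribution to $R$, and then pass to expectations using the fact that $0\le R_t\le 1$.

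\emph{Setup.} Work conditionally on $(V_0,V_{s_1},V_{s_2})$, and multiply numerator and denominator of $R_t$ by $e^{-2\lambda t}$. With the rescaled clone sizes $x_i(t)=e^{-\lambda t}X_i(t)$ and $y_j(t)=e^{-\lambda t}Y_j(t)$, we can write
\[
R_t=\frac{\sum_i x_i(t)^2+\sum_j y_j(t)^2}{\bigl(\sum_i x_i(t)+\sum_j y_j(t)\bigr)^2}.
\]
By Lemma~\ref{lem:one_step_clone_process}, the point process $\mathcal P^{(t)}_{1,12}$ converges vaguely to $\mathcal P_{12}$, and $\mathcal P^{(t)}_{2,21}$ converges vaguely to $\mathcal P_{21}$. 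The two processes are conditionally independent. In fact, the proof of that lemma gives convergence in probability: for each fixed $K$, the clones founded before $K$ converge almost surely, clone by clone.

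\emph{Step 1: convergence of the sums.} Vague convergence controls only test functions supported away from $0$, so the real work is to show
\[
\bigl(S_{12}^{(t)},S_{12}^{(2),(t)}\bigr)\to\bigl(S_{12},S_{12}^{(2)}\bigr)
\]
in probability, where $S_{12}^{(t)}=\sum_i x_i(t)$ and $S_{12}^{(2),(t)}=\sum_i x_i(t)^2$; the same is needed for the $(2,1)$ pathway. I would split the clones by founding time $U_k\le K$ versus $U_k>K$.

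For the early clones there are finitely many, and each $x_k(t)\to e^{-\lambda U_k}\xi_k$ almost surely. Hence both the sum and the sum of squares over this block converge almost surely.

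For the late clones, I would bound the first moment using $\mathbb E[C(r)]=e^{\lambda r}$:
\[
\mathbb E\Bigl[\sum_{K<U_k\le t}e^{-\lambda t}C_k(t-U_k)\Bigm| V_{s_1}\Bigr]\le \mu_{1,12}V_{s_1}\int_K^\infty e^{-(\lambda-\lambda_{s_1})u}\,du .
\]
The right-hand side tends to $0$ as $K\to\infty$, uniformly in $t$. The same bound holds for the limiting tail $\sum_{U_k>K}e^{-\lambda U_k}\xi_k$, since $\mathbb E[\xi]=1$. The sums of squares are dominated by $(\max_k x_k)\cdot S$, so they are controlled as well.

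A standard $\varepsilon/3$ argument then gives joint convergence in probability of the four quantities $S_{12},S_{12}^{(2)},S_{21},S_{21}^{(2)}$.

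\emph{Step 2: continuous mapping and bounded convergence.} The map $(a,b,c,d)\mapsto (b+d)/(a+c)^2$ is continuous wherever $a+c>0$. The limit satisfies $S_{12}+S_{21}>0$ almost surely: the Poisson intensities have infinite total mass, so $\mathcal P_{12}$ has infinitely many points whenever $V_{s_1}>0$, and $V_{s_1}>0$ almost surely under the approximation because its Laplace exponent is nondegenerate. The continuous mapping theorem therefore gives $R_t\to R$ in probability, conditionally and then unconditionally. Since $0\le R_t\le1$, by Cauchy--Schwarz and the zero convention, bounded convergence yields $\mathbb E[R_t]\to\mathbb E[R]$. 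If convergence is available only conditionally on the $V$'s, integrate afterwards using dominated convergence with bound $1$.

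\emph{Main obstacle.} The hardest step is Step 1, specifically the uniform-in-$t$ control of the total mass contributed by the many small, late-founded clones, which vague convergence does not see. I expect the first-moment bound above to handle it, because $\lambda_{s_1}<\lambda$ makes the integral converge; this is where the hypothesis $\lambda_{s_i}<\lambda_{s_j}$ is genuinely used. A secondary issue is the event $M(t)=0$, where $R_t=0$ by convention. Its probability tends to $0$, and since $R_t$ is bounded it does not affect the limit.
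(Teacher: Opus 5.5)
Your proof is correct, but it takes a different route from the paper. The paper works with distributional convergence of point processes. It combines the vague convergence of Lemma~\ref{lem:one_step_clone_process} for the two pathways and restricts to clone sizes in $(\varepsilon,M]$ so that the continuous mapping theorem applies. It removes the upper cutoff with a uniform-in-$t$ bound on the expected number of clones above $M$. It removes the lower cutoff with a uniform bound of order $V_{s_1}\varepsilon^{1-\lambda_{s_1}/\lambda}$ on the mass of clones below $\varepsilon$, obtained from a Laplace-functional estimate, and then compares $R_t$ with $R_t^{(\varepsilon,\infty)}$ on the event $\{S_t>\eta,\ S_t^{(0,\varepsilon)}\le\gamma\}$. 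Its output is convergence of expectations only.

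You instead use the pathwise coupling already built into the proof of Lemma~\ref{lem:one_step_clone_process}: the limiting points are $e^{-\lambda U_k}\xi_k$ for the same clones. You truncate by founding time rather than by clone size, and you need only $\mathbb E[C(r)]=e^{\lambda r}$ to make the late-founded mass small uniformly in $t$. For the squares, $\sum_{\mathrm{late}}x_k^2\le(\sum_{\mathrm{late}}x_k)^2$ suffices. This yields $R_t\to R$ in probability, which is stronger, comes with a shorter proof, and never needs the vague topology.

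Your route also treats the denominator more cleanly. Because $S_{12}+S_{21}>0$ almost surely, continuity of the ratio at the limit is immediate. The paper's truncation argument instead relies on $\lim_{\eta\downarrow0}\limsup_{t\to\infty}\mathbb P(S_t\le\eta)=0$, which it uses without spelling out; your convergence $S_t\to S$ in probability supplies exactly this. In exchange, the paper's argument is phrased entirely through convergence in distribution of clone-size point processes. That matches the statement of Lemma~\ref{lem:one_step_clone_process} and would carry over to settings where only distributional convergence, not an almost-sure coupling, is available.
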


\subsection{Proof of Lemma \ref{lem:simpson_convergence}}

\begin{proof}
Recall that we denote by \(X_i(t)\) the number of cells at time \(t\) in the \(i\)-th malignant clone arising along the \((1,2)\) pathway, and by \(Y_j(t)\) the corresponding size of the \(j\)-th malignant clone arising along the \((2,1)\) pathway. Let $x_i^{(t)}=e^{-\lambda t}X_i(t)$ and $y_j^{(t)}=e^{-\lambda t}Y_j(t)$. Define the pathway-specific rescaled clone-size point processes $\mathcal P_{12,t}=\sum_i\delta_{x_i^{(t)}}$ and $\mathcal P_{21,t}=\sum_j\delta_{y_j^{(t)}}$, and let $\mathcal P_t=\mathcal P_{12,t}+\mathcal P_{21,t}$. By Lemma \ref{lem:one_step_clone_process} and the conditional independence between $\mathcal P_{12,t}$ and $\mathcal P_{21,t}$, $\left(\mathcal P_{12,t},\mathcal P_{21,t}\right)\Rightarrow\left(\mathcal P_{12},\mathcal P_{21}\right)$ in the product vague topology on \(\mathcal M_p((0,\infty))^2\). Since addition is continuous under the vague topology, the continuous mapping theorem yields $\mathcal P_t \Rightarrow\mathcal P:=\mathcal P_{12}+\mathcal P_{21}$ in \(\mathcal M_p((0,\infty))\).

For $0<\varepsilon<M<\infty$, define \(S_t^{(\varepsilon,M)}=\int_{(\varepsilon,M]}x\mathcal P_t (dx)\), $T_t^{(\varepsilon,M)}=\int_{(\varepsilon,M]}x^2\mathcal P_t (dx)$, and \(R_t^{(\varepsilon,M)}=T_t^{(\varepsilon,M)}/(S_t^{(\varepsilon,M)})^2\) if \(S_t^{(\varepsilon,M)}>0\), and \(R_t^{(\varepsilon,M)}=0\) otherwise. Define the corresponding limiting quantities analogously. Because the limiting process has no atoms at $\varepsilon$ or $M$ and $\mathcal P_t \Rightarrow\mathcal P$, the continuous mapping theorem gives
\begin{equation}\label{eqn:vague convergence}
    R_t^{(\varepsilon,M)}\Rightarrow R^{(\varepsilon,M)}.
\end{equation}
Observe that $\mathbb P\left(R_t^{(\varepsilon,M)} \neq  R_t^{(\varepsilon,\infty)}\right)\le \mathbb P\left(\mathcal P_t\left(\left(M,\infty\right)\right)>0\right)$. By the Poisson construction and Markov's inequality,
\[
\begin{aligned}
\mathbb E\!\left[\mathcal P_{12,t}((M,\infty))\mid V_{s_1}\right]
&\le
\mu_{1,12}V_{s_1}
\int_{-\infty}^{t}
e^{\lambda_{s_1}u}
\min\!\left\{1,\frac{e^{-\lambda u}}{M}\right\}\,du\\
&\le
\mu_{1,12}V_{s_1}
\left(\frac{1}{\lambda_{s_1}}
+\frac{1}{\lambda-\lambda_{s_1}}\right)
M^{-\lambda_{s_1}/\lambda}.
\end{aligned}
\]
An analogous bound holds for \(\mathcal P_{21,t}\).
Both bounds are uniform in \(t\) and converge to zero almost surely
as \(M\to\infty\), since \(V_{s_1},V_{s_2}<\infty\) almost surely.
Bounding each conditional tail probability by the minimum of one
and its corresponding expected count, the union bound and dominated
convergence therefore give
\[
\lim_{M\to\infty}\sup_t
\mathbb P\!\left(\mathcal P_t((M,\infty))>0\right)=0.
\]
Hence, we have
\begin{equation}\label{eqn: M_limit_finite_process}
    \lim\limits_{M\rightarrow \infty} \sup\limits_{t}\mathbb P\left(R_t^{(\varepsilon,M)} \neq  R_t^{(\varepsilon,\infty)}\right)=0.
\end{equation}
For the limiting process, there are almost surely only finitely many points above $\varepsilon$. Hence
\begin{equation}\label{eqn: M_limit_limiting_process}
    R^{(\epsilon,M)}\rightarrow R^{(\epsilon,\infty)} \text{ almost surely as }M\rightarrow \infty.
\end{equation}
By \eqref{eqn:vague convergence}, \eqref{eqn: M_limit_finite_process}, and \eqref{eqn: M_limit_limiting_process}, we have, for any $\varepsilon>0$,
\begin{equation*}
    R_t^{(\varepsilon,\infty)}\Rightarrow R^{(\varepsilon,\infty)}.
\end{equation*}
Since these random variables lie in $[0,1]$, we have $\mathbb E\left[R_t^{(\varepsilon,\infty)}\right]\rightarrow E\left[R^{(\varepsilon,\infty)}\right]$.

It remains to remove the lower truncation at \(\varepsilon\). Define $S_t^{(0,\varepsilon)}=\int_{(0,\varepsilon)}x\,\mathcal P_t(dx)$ and $T_t^{(0,\varepsilon)}=\int_{(0,\varepsilon)}x^2\,\mathcal P_t(dx)$. Define \(S^{(0,\varepsilon)}\) and \(T^{(0,\varepsilon)}\) analogously for the limiting point process \(\mathcal P\). For \(\varepsilon>0\), we have the exact decompositions $S_t=S_t^{(0,\varepsilon)}+S_t^{(\varepsilon,\infty)}$ and $T_t=T_t^{(0,\varepsilon)}+T_t^{(\varepsilon,\infty)}$.

We first establish that the total mass contributed by clones of size at most \(\varepsilon\) is negligible uniformly in \(t\). Conditional on \(V_{s_1}\), the mutation events producing malignant clones along the \((1,2)\) pathway form a Poisson process with mutation intensity \(\mu_{1,12}V_{s_1}e^{\lambda_{s_1}u}\,du\). Since a malignant clone born at time \(u\) has expected rescaled size \(e^{-\lambda u}\), for \(\theta>0\),
\begin{align*}
\mathbb E\left[
\int_{(0,\infty)}
\left(1-e^{-\theta x}\right)
\mathcal P_{12,t}(dx)
\,\middle|\,
V_{s_1}
\right]
& \leq
\mu_{1,12}V_{s_1}
\int_{-\infty}^{t}
e^{\lambda_{s_1}u}
\min\left\{
1,\theta e^{-\lambda u}
\right\}du
\\
& \leq
\mu_{1,12}V_{s_1}
\int_{-\infty}^{\infty}
e^{\lambda_{s_1}u}
\min\left\{
1,\theta e^{-\lambda u}
\right\}du,
\end{align*}
where the first inequality is due to $1-e^{-\theta x}\le \min\left\{
1,\theta x \right\}$. Splitting the last integral at $u=\frac{\log\theta}{\lambda}$ gives
\begin{align*}
\int_{-\infty}^{\infty}
e^{\lambda_{s_1}u}
\min\left\{
1,\theta e^{-\lambda u}
\right\}du
&=
\int_{-\infty}^{(\log\theta)/\lambda}
e^{\lambda_{s_1}u}\,du
+
\theta
\int_{(\log\theta)/\lambda}^{\infty}
e^{-(\lambda-\lambda_{s_1})u}\,du
\\
&=
\left(
\frac{1}{\lambda_{s_1}}
+
\frac{1}{\lambda-\lambda_{s_1}}
\right)
\theta^{\lambda_{s_1}/\lambda}.
\end{align*}
Consequently,
\begin{align}
&\mathbb E\left[
\int_{(0,\infty)}
\left(1-e^{-\theta x}\right)
\mathcal P_{12,t}(dx)
\,\middle|\,
V_{s_1}
\right] \leq
\mu_{1,12}V_{s_1}
\left(
\frac{1}{\lambda_{s_1}}
+
\frac{1}{\lambda-\lambda_{s_1}}
\right)
\theta^{\lambda_{s_1}/\lambda}.
\label{eq:finite-time-laplace-bound-12}
\end{align}
Similarly,
\begin{align}
\mathbb E\left[
\int_{(0,\infty)}
\left(1-e^{-\theta x}\right)
\mathcal P_{21,t}(dx)
\,\middle|\,
V_{s_2}
\right]
\leq
\mu_{2,21}V_{s_2}
\left(
\frac{1}{\lambda_{s_2}}
+
\frac{1}{\lambda-\lambda_{s_2}}
\right)
\theta^{\lambda_{s_2}/\lambda}.
\label{eq:finite-time-laplace-bound-21}
\end{align}

For \(0<x\leq\varepsilon\), concavity of \(1-e^{-z}\) on \([0,1]\) gives $1-e^{-x/\varepsilon}\geq(1-e^{-1})\frac{x}{\varepsilon}$. Equivalently, $x\leq\frac{\varepsilon}{1-e^{-1}}\left(1-e^{-x/\varepsilon}\right)$. Using \eqref{eq:finite-time-laplace-bound-12} with
\(\theta=\varepsilon^{-1}\), we obtain
\begin{align*}
\mathbb E\left[
\int_{(0,\varepsilon]}
x\,\mathcal P_{12,t}(dx)
\,\middle|\,
V_{s_1}
\right]
& \leq
\frac{\varepsilon}{1-e^{-1}}
\mathbb E\left[
\int_{(0,\infty)}
\left(1-e^{-x/\varepsilon}\right)
\mathcal P_{12,t}(dx)
\,\middle|\,
V_{s_1}
\right]
\\
&\leq
\frac{\mu_{1,12}V_{s_1}}{1-e^{-1}}
\left(
\frac{1}{\lambda_{s_1}}
+
\frac{1}{\lambda-\lambda_{s_1}}
\right)
\varepsilon^{1-\lambda_{s_1}/\lambda}.
\end{align*}
Similarly,
\begin{align*}
\mathbb E\left[
\int_{(0,\varepsilon]}
x\,\mathcal P_{21,t}(dx)
\,\middle|\,
V_{s_2}
\right]
\leq
\frac{\mu_{2,21}V_{s_2}}{1-e^{-1}}
\left(
\frac{1}{\lambda_{s_2}}
+
\frac{1}{\lambda-\lambda_{s_2}}
\right)
\varepsilon^{1-\lambda_{s_2}/\lambda}.
\end{align*}

For every \(\gamma>0\), the union bound and conditional Markov
inequality therefore give
\begin{align*}
\mathbb P\left(
S_t^{(0,\varepsilon)}>\gamma
\right)
&\leq
\mathbb P\left(
\int_{(0,\varepsilon]}
x\,\mathcal P_{12,t}(dx)
>
\frac{\gamma}{2}
\right)
+
\mathbb P\left(
\int_{(0,\varepsilon]}
x\,\mathcal P_{21,t}(dx)
>
\frac{\gamma}{2}
\right)
\\
&\leq
\mathbb E\left[
1\wedge
\frac{
2\mu_{1,12}V_{s_1}
}{
\gamma(1-e^{-1})
}
\left(
\frac{1}{\lambda_{s_1}}
+
\frac{1}{\lambda-\lambda_{s_1}}
\right)
\varepsilon^{1-\lambda_{s_1}/\lambda}
\right]
\\
&\quad+
\mathbb E\left[
1\wedge
\frac{
2\mu_{2,21}V_{s_2}
}{
\gamma(1-e^{-1})
}
\left(
\frac{1}{\lambda_{s_2}}
+
\frac{1}{\lambda-\lambda_{s_2}}
\right)
\varepsilon^{1-\lambda_{s_2}/\lambda}
\right].
\end{align*}
Because $\lambda_{s_1}<\lambda$, $\lambda_{s_2}<\lambda$, and \(V_{s_1},V_{s_2}<\infty\) almost surely, dominated convergence
implies
\begin{equation}
\label{eq:uniform-small-clone-mass}
\lim_{\varepsilon\downarrow0}
\sup_t
\mathbb P\left(
S_t^{(0,\varepsilon)}>\gamma
\right)
=0.
\end{equation}

We now compare \(R_t\) with \(R_t^{(\varepsilon,\infty)}\). Fix \(\eta>0\) and \(0<\gamma<\eta/2\). On the event $\left\{S_t>\eta,S_t^{(0,\varepsilon)}\leq\gamma\right\}$, we have $S_t^{(\varepsilon,\infty)}=S_t-S_t^{(0,\varepsilon)}>\eta-\gamma>\frac{\eta}{2}$. Moreover, every clone counted in \(T_t^{(0,\varepsilon)}\) has size
at most \(\varepsilon\), so $T_t^{(0,\varepsilon)}\leq\varepsilon S_t^{(0,\varepsilon)}$. Using $T_t^{(\varepsilon,\infty)}\leq\left(S_t^{(\varepsilon,\infty)}\right)^2$, we have
\begin{align*}
\left|
R_t-R_t^{(\varepsilon,\infty)}
\right|
&\leq
\frac{
T_t^{(0,\varepsilon)}
}{
S_t^2
}
+
T_t^{(\varepsilon,\infty)}
\left[
\frac{1}{
\left(
S_t^{(\varepsilon,\infty)}
\right)^2
}
-
\frac{1}{S_t^2}
\right]
\\
&\leq
\frac{
\varepsilon S_t^{(0,\varepsilon)}
}{
\eta^2
}
+
1-
\left(
\frac{
S_t^{(\varepsilon,\infty)}
}{
S_t
}
\right)^2
\\
&=
\frac{
\varepsilon S_t^{(0,\varepsilon)}
}{
\eta^2
}
+
1-
\left(
1-
\frac{
S_t^{(0,\varepsilon)}
}{
S_t
}
\right)^2
\\
&\leq
\left(
\frac{\varepsilon}{\eta^2}
+
\frac{2}{\eta}
\right)
S_t^{(0,\varepsilon)}.
\end{align*}
Therefore, on this event, $\left|R_t-R_t^{(\varepsilon,\infty)}\right|\leq\left(\frac{\varepsilon}{\eta^2}+\frac{2}{\eta}\right)\gamma$. Since both Simpson indices lie in \([0,1]\), it follows that
\begin{align}
\mathbb E\left[
\left|
R_t-R_t^{(\varepsilon,\infty)}
\right|
\right]
\leq
\mathbb P(S_t\leq\eta)
+
\mathbb P\left(
S_t^{(0,\varepsilon)}>\gamma
\right)
+
\left(
\frac{\varepsilon}{\eta^2}
+
\frac{2}{\eta}
\right)\gamma.
\label{eq:finite-time-lower-truncation}
\end{align}
Taking \(\limsup_{t\to\infty}\) in
\eqref{eq:finite-time-lower-truncation}, then letting
\(\varepsilon\downarrow0\), and using
\eqref{eq:uniform-small-clone-mass}, gives
\begin{align*}
\limsup_{\varepsilon\downarrow0}
\limsup_{t\to\infty}
\mathbb E\left[
\left|
R_t-R_t^{(\varepsilon,\infty)}
\right|
\right]
\leq
\limsup_{t\to\infty}
\mathbb P(S_t\leq\eta)
+
\frac{2\gamma}{\eta}.
\end{align*}
Letting first \(\gamma\downarrow0\) and then \(\eta\downarrow0\) yields
\begin{equation}
\label{eq:finite-time-lower-truncation-vanishes}
\lim_{\varepsilon\downarrow0}
\limsup_{t\to\infty}
\mathbb E\left[
\left|
R_t-R_t^{(\varepsilon,\infty)}
\right|
\right]
=0.
\end{equation}

For the limiting process, finiteness of \(S\) implies $S^{(0,\varepsilon)}
\rightarrow 0$ almost surely as $\varepsilon\rightarrow 0$. Moreover, $T^{(0,\varepsilon)}\leq\varepsilon S^{(0,\varepsilon)}\rightarrow0$ almost surely. Consequently, $S^{(\varepsilon,\infty)}\rightarrow S$ and $T^{(\varepsilon,\infty)}\rightarrow T$ almost surely. Since \(S>0\) almost surely, $R^{(\varepsilon,\infty)}\rightarrow R$ almost surely. Because $R^{(\varepsilon,\infty)}$ and $R$ lie in $[0,1]$, dominated convergence gives
\begin{equation}
\label{eq:limiting-lower-truncation-vanishes}
\lim_{\varepsilon\downarrow0}
\mathbb E\left[
\left|
R^{(\varepsilon,\infty)}-R
\right|
\right]
=0.
\end{equation}

Finally,
\begin{align*}
\left|
\mathbb E[R_t]-\mathbb E[R]
\right|
\leq
\mathbb E\left[
\left|
R_t-R_t^{(\varepsilon,\infty)}
\right|
\right]
+
\left|
\mathbb E\left[
R_t^{(\varepsilon,\infty)}
\right]
-
\mathbb E\left[
R^{(\varepsilon,\infty)}
\right]
\right|
+
\mathbb E\left[
\left|
R^{(\varepsilon,\infty)}-R
\right|
\right].
\end{align*}
For fixed \(\varepsilon>0\), the middle term converges to zero as
\(t\to\infty\). Therefore,
\begin{align*}
\limsup_{t\to\infty}
\left|
\mathbb E[R_t]-\mathbb E[R]
\right|
\leq
\limsup_{t\to\infty}
\mathbb E\left[
\left|
R_t-R_t^{(\varepsilon,\infty)}
\right|
\right]
+
\mathbb E\left[
\left|
R^{(\varepsilon,\infty)}-R
\right|
\right].
\end{align*}
Letting \(\varepsilon\downarrow0\) and using \eqref{eq:finite-time-lower-truncation-vanishes} and \eqref{eq:limiting-lower-truncation-vanishes}, we conclude that $\lim_{t\to\infty}\mathbb E[R_t]=\mathbb E[R]$.
\end{proof}

\subsection*{Acknowledgements}
The authors would like to thank Dr. Xinyun Chen for helpful comments on the draft. The work of ZW was supported in part by the Key Program of the National Natural Science Foundation of China (NSFC) under Grant No. 72495131 and the Guangdong (China) Provincial Key Laboratory of Mathematical Foundations for Artificial Intelligence [2023B1212010001].

\subsection*{Declaration of generative AI in the manuscript preparation process}
During the preparation of this work the authors used ChatGPT in order to refine the writing and enhance the linguistic quality of the manuscript. After using this tool/service, the authors reviewed and edited the content as needed and take full responsibility for the content of the manuscript.

\newpage

\clearpage
\begin{singlespace}

\bibliography{bibliography}

\begin{thebibliography}{}

\bibitem[Ahmed et~al., 2026]{ahmed2026site}
Ahmed, A., Foo, J., Gunnarsson, E., and Leder, K. (2026).
\newblock The site frequency spectrum in an exponentially-growing population with selection.
\newblock {\em arXiv preprint arXiv:2607.16479}.

\bibitem[Armitage and Doll, 1954]{armitage1954age}
Armitage, P. and Doll, R. (1954).
\newblock The age distribution of cancer and a multi-stage theory of carcinogenesis.
\newblock {\em British journal of cancer}, 8:1--12.

\bibitem[Avanzini and Antal, 2019]{avanzini2019cancer}
Avanzini, S. and Antal, T. (2019).
\newblock Cancer recurrence times from a branching process model.
\newblock {\em PLoS computational biology}, 15(11):e1007423.

\bibitem[Beerenwinkel et~al., 2007]{beerenwinkel2007genetic}
Beerenwinkel, N., Antal, T., Dingli, D., Traulsen, A., Kinzler, K.~W., Velculescu, V.~E., Vogelstein, B., and Nowak, M.~A. (2007).
\newblock Genetic progression and the waiting time to cancer.
\newblock {\em PLoS computational biology}, 3(11):e225.

\bibitem[Bozic et~al., 2010]{bozic2010accumulation}
Bozic, I., Antal, T., Ohtsuki, H., Carter, H., Kim, D., Chen, S., Karchin, R., Kinzler, K.~W., Vogelstein, B., and Nowak, M.~A. (2010).
\newblock Accumulation of driver and passenger mutations during tumor progression.
\newblock {\em Proceedings of the National Academy of Sciences}, 107(43):18545--18550.

\bibitem[Cheek and Antal, 2018]{cheek2018mutation}
Cheek, D. and Antal, T. (2018).
\newblock Mutation frequencies in a birth--death branching process.
\newblock {\em The Annals of Applied Probability}, 28(6):3922--3947.

\bibitem[Cheek and Antal, 2020]{cheek2020genetic}
Cheek, D. and Antal, T. (2020).
\newblock Genetic composition of an exponentially growing cell population.
\newblock {\em Stochastic Processes and their Applications}, 130(11):6580--6624.

\bibitem[Dagogo-Jack and Shaw, 2018]{dagogo2018tumour}
Dagogo-Jack, I. and Shaw, A.~T. (2018).
\newblock Tumour heterogeneity and resistance to cancer therapies.
\newblock {\em Nature reviews Clinical oncology}, 15(2):81--94.

\bibitem[Davis et~al., 2017]{davis2017tumor}
Davis, A., Gao, R., and Navin, N. (2017).
\newblock Tumor evolution: Linear, branching, neutral or punctuated?
\newblock {\em Biochimica et Biophysica Acta (BBA)-Reviews on Cancer}, 1867(2):151--161.

\bibitem[Dewanji et~al., 2011]{dewanji2011number}
Dewanji, A., Jeon, J., Meza, R., and Luebeck, E.~G. (2011).
\newblock Number and size distribution of colorectal adenomas under the multistage clonal expansion model of cancer.
\newblock {\em PLoS Computational Biology}, 7(10):e1002213.

\bibitem[Durrett et~al., 2011]{durrett2011intratumor}
Durrett, R., Foo, J., Leder, K., Mayberry, J., and Michor, F. (2011).
\newblock Intratumor heterogeneity in evolutionary models of tumor progression.
\newblock {\em Genetics}, 188(2):461--477.

\bibitem[Durrett and Moseley, 2010]{durrett2010evolution}
Durrett, R. and Moseley, S. (2010).
\newblock Evolution of resistance and progression to disease during clonal expansion of cancer.
\newblock {\em Theoretical population biology}, 77(1):42--48.

\bibitem[Durrett et~al., 2009]{durrett2009waiting}
Durrett, R., Schmidt, D., and Schweinsberg, J. (2009).
\newblock A waiting time problem arising from the study of multi-stage carcinogenesis.
\newblock {\em The Annals of Applied Probability}, 19(2):676--718.

\bibitem[Gerlinger et~al., 2012]{gerlinger2012intratumor}
Gerlinger, M., Rowan, A.~J., Horswell, S., Larkin, J., Endesfelder, D., Gronroos, E., Martinez, P., Matthews, N., Stewart, A., Tarpey, P., et~al. (2012).
\newblock Intratumor heterogeneity and branched evolution revealed by multiregion sequencing.
\newblock {\em New England journal of medicine}, 366(10):883--892.

\bibitem[Gunnarsson et~al., 2021]{gunnarsson2021exact}
Gunnarsson, E.~B., Leder, K., and Foo, J. (2021).
\newblock Exact site frequency spectra of neutrally evolving tumors: A transition between power laws reveals a signature of cell viability.
\newblock {\em Theoretical Population Biology}, 142:67--90.

\bibitem[Haeno et~al., 2007]{haeno2007evolution}
Haeno, H., Iwasa, Y., and Michor, F. (2007).
\newblock The evolution of two mutations during clonal expansion.
\newblock {\em Genetics}, 177(4):2209--2221.

\bibitem[Hanagal et~al., 2022]{hanagal2022large}
Hanagal, P., Leder, K., and Wang, Z. (2022).
\newblock Large deviations of cancer recurrence timing.
\newblock {\em Stochastic Processes and their Applications}, 147:1--50.

\bibitem[Iwasa et~al., 2006]{iwasa2006evolution}
Iwasa, Y., Nowak, M.~A., and Michor, F. (2006).
\newblock Evolution of resistance during clonal expansion.
\newblock {\em Genetics}, 172(4):2557--2566.

\bibitem[Kent and Green, 2017]{kent2017order}
Kent, D.~G. and Green, A.~R. (2017).
\newblock Order matters: the order of somatic mutations influences cancer evolution.
\newblock {\em Cold Spring Harbor perspectives in medicine}, 7(4):a027060.

\bibitem[Knudson~Jr, 1971]{knudson1971mutation}
Knudson~Jr, A.~G. (1971).
\newblock Mutation and cancer: statistical study of retinoblastoma.
\newblock {\em Proceedings of the National Academy of Sciences}, 68(4):820--823.

\bibitem[Last and Penrose, 2017]{last2017lectures}
Last, G. and Penrose, M. (2017).
\newblock {\em Lectures on the Poisson Process}, volume~7 of {\em Institute of Mathematical Statistics Textbooks}.
\newblock Cambridge University Press.

\bibitem[Leder et~al., 2024]{leder2024parameter}
Leder, K., Sun, R., Wang, Z., and Zhang, X. (2024).
\newblock Parameter estimation from single patient, single time-point sequencing data of recurrent tumors.
\newblock {\em Journal of Mathematical Biology}, 89(5):51.

\bibitem[Leder and Wang, 2026]{leder2026clonal}
Leder, K. and Wang, Z. (2026).
\newblock Clonal diversity at early cancer recurrence.
\newblock {\em Bulletin of Mathematical Biology}, 88(5):75.

\bibitem[Leder et~al., 2025]{leder2025parameter}
Leder, K., Wang, Z., and Zhang, X. (2025).
\newblock Parameter estimation in recurrent tumor evolution with finite carrying capacity.
\newblock {\em arXiv preprint arXiv:2510.01078}.

\bibitem[Levine et~al., 2019]{levine2019roles}
Levine, A.~J., Jenkins, N.~A., and Copeland, N.~G. (2019).
\newblock The roles of initiating truncal mutations in human cancers: the order of mutations and tumor cell type matters.
\newblock {\em Cancer cell}, 35(1):10--15.

\bibitem[Li et~al., 2023a]{li2023comparison}
Li, A., Kibby, D., and Foo, J. (2023a).
\newblock A comparison of mutation and amplification-driven resistance mechanisms and their impacts on tumor recurrence.
\newblock {\em Journal of Mathematical Biology}, 87(4):59.

\bibitem[Li et~al., 2023b]{li2023mathematical}
Li, L., Hu, Y., Xu, Y., and Tang, S. (2023b).
\newblock Mathematical modeling the order of driver gene mutations in colorectal cancer.
\newblock {\em PLOS Computational Biology}, 19(6):e1011225.

\bibitem[Mahieu et~al., 2024]{mahieu2024oraov1}
Mahieu, C.~I., Mancini, A.~G., Vikram, E.~P., Planells-Palop, V., Joseph, N.~M., and Tward, A.~D. (2024).
\newblock Oraov1, ccnd1, and mir548k are the driver oncogenes of the 11q13 amplicon in squamous cell carcinoma.
\newblock {\em Molecular Cancer Research}, 22(2):152--168.

\bibitem[Maley et~al., 2006]{maley2006genetic}
Maley, C.~C., Galipeau, P.~C., Finley, J.~C., Wongsurawat, V.~J., Li, X., Sanchez, C.~A., Paulson, T.~G., Blount, P.~L., Risques, R.-A., Rabinovitch, P.~S., et~al. (2006).
\newblock Genetic clonal diversity predicts progression to esophageal adenocarcinoma.
\newblock {\em Nature genetics}, 38(4):468--473.

\bibitem[McDonald and Kimmel, 2015]{mcdonald2015multitype}
McDonald, T.~O. and Kimmel, M. (2015).
\newblock A multitype infinite-allele branching process with applications to cancer evolution.
\newblock {\em Journal of Applied Probability}, 52(3):864--876.

\bibitem[Nicholson and Antal, 2016]{nicholson2016universal}
Nicholson, M.~D. and Antal, T. (2016).
\newblock Universal asymptotic clone size distribution for general population growth.
\newblock {\em Bulletin of Mathematical Biology}, 78(11):2243--2276.

\bibitem[Nicholson and Antal, 2019]{nicholson2019competing}
Nicholson, M.~D. and Antal, T. (2019).
\newblock Competing evolutionary paths in growing populations with applications to multidrug resistance.
\newblock {\em PLoS computational biology}, 15(4):e1006866.

\bibitem[Nicholson et~al., 2023]{nicholson2023sequential}
Nicholson, M.~D., Cheek, D., and Antal, T. (2023).
\newblock Sequential mutations in exponentially growing populations.
\newblock {\em PLoS computational biology}, 19(7):e1011289.

\bibitem[Norrie et~al., 2021]{norrie2021retinoblastoma}
Norrie, J.~L., Nityanandam, A., Lai, K., Chen, X., Wilson, M., Stewart, E., Griffiths, L., Jin, H., Wu, G., Orr, B., et~al. (2021).
\newblock Retinoblastoma from human stem cell-derived retinal organoids.
\newblock {\em Nature Communications}, 12(1):4535.

\bibitem[Nowell, 1976]{nowell1976clonal}
Nowell, P.~C. (1976).
\newblock The clonal evolution of tumor cell populations: Acquired genetic lability permits stepwise selection of variant sublines and underlies tumor progression.
\newblock {\em Science}, 194(4260):23--28.

\bibitem[Ortmann et~al., 2015]{ortmann2015effect}
Ortmann, C.~A., Kent, D.~G., Nangalia, J., Silber, Y., Wedge, D.~C., Grinfeld, J., Baxter, E.~J., Massie, C.~E., Papaemmanuil, E., Menon, S., et~al. (2015).
\newblock Effect of mutation order on myeloproliferative neoplasms.
\newblock {\em New England Journal of Medicine}, 372(7):601--612.

\bibitem[Paterson et~al., 2020]{paterson2020mathematical}
Paterson, C., Clevers, H., and Bozic, I. (2020).
\newblock Mathematical model of colorectal cancer initiation.
\newblock {\em Proceedings of the National Academy of Sciences}, 117(34):20681--20688.

\bibitem[Smith and Haigh, 1974]{smith1974hitch}
Smith, J.~M. and Haigh, J. (1974).
\newblock The hitch-hiking effect of a favourable gene.
\newblock {\em Genetics Research}, 23(1):23--35.

\bibitem[Stein and Werner, 2025]{stein2025patterns}
Stein, A. and Werner, B. (2025).
\newblock On the patterns of genetic intra-tumor heterogeneity before and after treatment.
\newblock {\em Genetics}, 230(4):iyaf101.

\bibitem[Storey et~al., 2017]{storey2017spatial}
Storey, K., Ryser, M.~D., Leder, K., and Foo, J. (2017).
\newblock Spatial measures of genetic heterogeneity during carcinogenesis.
\newblock {\em Bulletin of mathematical biology}, 79(2):237--276.

\bibitem[Teimouri and Kolomeisky, 2021]{teimouri2021temporal}
Teimouri, H. and Kolomeisky, A.~B. (2021).
\newblock Temporal order of mutations influences cancer initiation dynamics.
\newblock {\em Physical Biology}, 18(5):056002.

\bibitem[Vogelstein and Kinzler, 2004]{vogelstein2004cancer}
Vogelstein, B. and Kinzler, K.~W. (2004).
\newblock Cancer genes and the pathways they control.
\newblock {\em Nature medicine}, 10(8):789--799.

\bibitem[Wang et~al., 2024]{wang2024order}
Wang, Y., Shtylla, B., and Chou, T. (2024).
\newblock Order-of-mutation effects on cancer progression: models for myeloproliferative neoplasm.
\newblock {\em Bulletin of Mathematical Biology}, 86(3):32.

\bibitem[Zhang and Bozic, 2024]{zhang2024accumulation}
Zhang, R. and Bozic, I. (2024).
\newblock Accumulation of oncogenic mutations during progression from healthy tissue to cancer.
\newblock {\em Bulletin of Mathematical Biology}, 86(12):1--33.

\bibitem[Zhang et~al., 2023]{zhang2023waiting}
Zhang, R., Ukogu, O.~A., and Bozic, I. (2023).
\newblock Waiting times in a branching process model of colorectal cancer initiation.
\newblock {\em Theoretical Population Biology}, 151:44--63.

\end{thebibliography}

\end{singlespace}

\end{document}